\newcommand{\eg}{{\em e.g., }}
\newcommand{\sto}{\operatorname{s.t.} \,}
\newcommand{\ess}{\operatorname{ess}}
\newcommand{\sinc}{\operatorname{sinc}}
\newcommand{\ejo}{{{(e^{j\omega})}}}
\newcommand{\ejt}{{{(e^{j\omega T})}}}
\newcommand{\Span}{\operatorname{span} }
\newcommand{\I}{{\mathcal{I}}}
\newcommand{\A}{{\mathcal{A}}}
\newcommand{\bbf}{{{\bf F}}}
\newcommand{\bba}{{{\bf A}}}
\newcommand{\bbz}{{{\bf Z}}}
\newcommand{\bbm}{{{\bf M}}}
\newcommand{\bbw}{{{\bf W}}}
\newcommand{\bbu}{{{\bf U}}}
\newcommand{\bbv}{{{\bf V}}}
\newcommand{\bbd}{{{\bf D}}}
\newcommand{\bbc}{{{\bf C}}}
\newcommand{\bbx}{{{\bf X}}}
\newcommand{\bbi}{{{\bf I}}}
\newcommand{\bbq}{{{\bf Q}}}
\newcommand{\bc}{{{\bf c}}}
\newcommand{\bd}{{{\bf d}}}
\newcommand{\ba}{{{\bf a}}}
\newcommand{\bx}{{{\bf x}}}
\newcommand{\bs}{{{\bf s}}}
\newcommand{\bb}{{{\bf b}}}
\newcommand{\bgam}{\mbox{\boldmath{$\gamma$}}}
\newcommand{\bgaml}{\mbox{\boldmath{$\Gamma$}}}
\newcommand{\bpsi}{\mbox{\boldmath{$\Psi$}}}
\newcommand{\bphi}{\mbox{\boldmath{$\Phi$}}}
\newcommand{\bpsil}{\mbox{\boldmath{$\psi$}}}
\newcommand{\bphil}{\mbox{\boldmath{$\phi$}}}
\newcommand{\bl}{\left(}
\newcommand{\br}{\right)}
\newcommand{\blc}{\left\{}
\newcommand{\brc}{\right\}}
\newcommand{\ZZ}{{\mathbb{Z}}}
\newcommand{\CC}{{\mathbb{C}}}
\newcommand{\RR}{{\mathbb{R}}}
\newcommand{\inner}[2]{{\langle#1,#2\rangle}}
\newtheorem{theorem}{Theorem}
\newtheorem{proposition}{Proposition}
\title{Uncertainty Relations for Shift-Invariant Analog Signals}
\author{Yonina C. Eldar\thanks{Department of Electrical Engineering,
Technion---Israel Institute of Technology, Haifa 32000, Israel.
Phone: +972-4-8293256, fax: +972-4-8295757, E-mail:
yonina@ee.technion.ac.il. This work was supported in part by the
Israel Science Foundation under Grant no. 1081/07 and by the
European Commission in the framework of the FP7 Network of
Excellence in Wireless COMmunications NEWCOM++ (contract no.
216715).}}
\date{\today}
\begin{document}

\maketitle

\begin{abstract}

The past several years have witnessed a surge of research
investigating various aspects of sparse representations and
compressed sensing. Most of this work has focused on the
finite-dimensional setting in which the goal is to decompose a
finite-length vector into a given finite dictionary. Underlying
many of these results is the conceptual notion of an uncertainty
principle: a signal cannot be sparsely represented in two
different bases. Here, we extend these ideas and results to the
analog, infinite-dimensional setting by considering signals that
lie in a finitely-generated shift-invariant (SI) space. This class
of signals is rich enough to include many interesting special
cases such as multiband signals and splines. By adapting the
notion of coherence defined for finite dictionaries to infinite SI
representations, we develop an uncertainty principle similar in
spirit to its finite counterpart. We demonstrate tightness of our
bound by considering a bandlimited lowpass train that achieves the
uncertainty principle. Building upon these results and similar
work in the finite setting, we show how to find a sparse
decomposition in an overcomplete dictionary by solving a convex
optimization problem. The distinguishing feature of our approach
is the fact that even though the problem is defined over an
infinite domain with infinitely many variables and constraints,
under certain conditions on the dictionary spectrum our algorithm
can find the sparsest representation by solving a
finite-dimensional problem.
\end{abstract}

\section{Introduction}

Uncertainty relations date back to the work of Weyl and Heisenberg
who showed that a signal cannot be localized simultaneously in
both time and frequency. This basic principle was then extended by
Landau, Pollack, Slepian and later Donoho and Stark to the case in
which the signals are not restricted to be concentrated on a
single interval \cite{SP61,LP61,LP62,DS89}. The uncertainty
principle has deep philosophical interpretations. For example, in
the context of quantum mechanics it implies that a particle's
position and momentum cannot be simultaneously measured. In
harmonic analysis it imposes limits on the time-frequency
resolution \cite{G46}.

Recently, there has been a surge of research into discrete
uncertainty relations in more general finite-dimensional bases
\cite{DH01,EB02,CR06}. This work has been spurred in part by the
relationship between sparse representations and the emerging field
of compressed sensing \cite{D06,CRT06}. In particular, several
works have shown that discrete uncertainty relations can be used
to establish uniqueness of sparse decompositions in different
bases representations. Furthermore, there is an intimate
connection between uncertainty principles and the ability to
recover sparse expansions using convex programming
\cite{DH01,EB02,FN03}.

The vast interest in representations in redundant dictionaries
stems from the fact that the flexibility offered by such systems
can lead to decompositions that are extremely sparse, namely use
only a few dictionary elements. However, finding a sparse
expansion in practice is in general a difficult combinatorial
optimization problem. Two fundamental questions at the heart of
overcomplete representations are what is the smallest number of
dictionary elements needed to represent a given signal, and how
can one find the sparsest expansion in a computationally efficient
manner. In recent years, several key papers have addressed both of
these questions in a discrete setting, in which the signals to be
represented are finite-length vectors
\cite{DH01,EB02,FN03,GN03,DE03,T04,CRT06,CR06}.

The discrete generalized uncertainty principle for pairs of
orthonormal bases states that a vector in $\RR^N$ cannot be
simultaneously sparse in two orthonormal bases. The number of
non-zero representation coefficients is bounded below by the
inverse coherence \cite{DH01,EB02}. The coherence is defined as
the largest absolute inner product between vectors in each basis
\cite{MZ93,DH01}. This principle has been used to establish
conditions under which a convex $\ell_1$ optimization program can
recover the sparsest possible decomposition in a dictionary
consisting of both bases \cite{DH01,EB02,FN03}. These results
where later generalized in \cite{DE03,GN03,T04} to representations
in arbitrary dictionaries and to other efficient reconstruction
algorithms \cite{T04}.

The classical uncertainty principle is concerned with expanding a
continuous-time analog signal in the time and frequency domains.
However, the generalizations outlined above are mainly focused on
the finite-dimensional setting. In this paper, our goal is to
extend these recent ideas and results to the analog domain by
first deriving uncertainty relations for more general classes of
analog signals and arbitrary analog dictionaries, and then
suggesting concrete algorithms to decompose a continuous-time
signal into a sparse expansion in an infinite-dimensional
dictionary.

In our development, we focus our attention on continuous-time
signals that lie in shift-invariant (SI) subspaces of $L_2$
\cite{DDR94,GHM94,CE05}. Such signals can be expressed in terms of
linear combinations of shifts of a finite set of generators:
\begin{equation}
\label{eq:model} x(t)=\sum_{\ell=1}^N \sum_{n \in \ZZ}
a_\ell[n]\phi_\ell(t-nT),
\end{equation}
where $\phi_{\ell}(t), 1 \leq \ell \leq N$ are the SI generators,
and $a_{\ell}[n]$ are the expansions coefficients. Clearly, $x(t)$
is characterized by infinitely many coefficients $a_{\ell}[n]$.
Therefore, the finite results which provide bounds on the number
of non-zero expansion coefficients in pairs of bases
decompositions are not immediately relevant here. Instead, we
characterize analog sparsity as the number of active generators
that comprise a given representation, where the $\ell$th generator
is said to be active if $a_{\ell}[n],n \in \ZZ$ is not identically
zero.

Starting with expansions in two orthonormal bases, we show that
the number of active generators in each representation obeys an
uncertainty principle similar in spirit to that of finite
decompositions. The key to establishing this relation is in
defining an analog coherence between the two bases. Our definition
replaces the inner product in the finite setting by the largest
spectral value of the sampled cross-correlation between basis
elements, in the analog case. The similarity between the finite
and infinite cases can also be seen by examining settings in which
the uncertainty bound is tight. In the discrete scenario, the
lower uncertainty limit is achieved by decomposing a spike train
into the spike and Fourier bases, which are maximally incoherent
\cite{DS89}. To generalize this result to the analog domain we
first develop an analog spike-Fourier pair and prove that it is
maximally incoherent. The analog spike basis is obtained by
modulations of the basic lowpass filter (LPF), which is maximally
spread in frequency. In the time domain, these signals are given
by shifts of the sinc function, whose samples generate shifted
spikes. The discrete Fourier basis is replaced by an analog
Fourier basis, in which the elements are frequency shifts of a
narrow LPF in the continuous-time frequency domain. Tightness of
the uncertainty relation is demonstrated by expanding a train of
narrow LPFs in both bases.

We next address the problem of sparse decomposition in an
overcomplete dictionary, corresponding to using more than $N$
generators in (\ref{eq:model}). In the finite setting, it can be
shown that under certain conditions on the dictionary, a sparse
decomposition can be found using computationally efficient
algorithms such as $\ell_1$ optimization
\cite{CDS99,EB02,FN03,D06}. However, directly generalizing this
result to the analog setting is challenging. Although in principle
we can define an $\ell_1$ optimization program similar in spirit
to its finite counterpart, it will involve infinitely many
variables and constraints and therefore it is not clear how to
solve it in practice. Instead, we develop an alternative approach
by exploiting recent results on analog compressed sensing
\cite{ME07,ME08,E08,ME09}, that leads to a finite-dimensional
convex problem whose solution can be used to find the analog
sparse decomposition. Our algorithm is based on a three-stage
process: In the first step we sample the analog signal ignoring
the sparsity, and formulate the decomposition problem in terms of
sparse signal recovery from the given samples. In the second
stage, we exploit results on infinite measurement models (IMV) and
multiple measurement vectors (MMV) \cite{EM082,E08,Chen,Cotter} in
order to determine the active generators, by solving a
finite-dimensional convex optimization problem. Finally, we use
this information to simultaneously solve the resulting infinite
set of equations by inverting a finite matrix \cite{EM08}. Our
method works under certain technical conditions, which we
elaborate on in the appropriate section. We also indicate how
these results can be extended to more general classes of
dictionaries.

The paper is organized as follows. In Section~\ref{sec:prob} we
review the generalized discrete uncertainty principle and
introduce the class of analog signals we will focus on. The analog
uncertainty principle is formulated and proved in
Section~\ref{sec:ur}. In Section~\ref{sec:example} we consider a
detailed example illustrating the analog uncertainty relation and
its tightness. In particular we introduce the analog version of
the maximally incoherent spike-Fourier pair. Sparse decompositions
in two orthonormal analog bases are discussed in
Section~\ref{sec:rep}. These results are extended to arbitrary
dictionaries in Section~\ref{sec:frame}.

 In the sequel, we denote signals in
$L_2$ by lower case letters \eg $x(t)$, and SI subspaces of $L_2$
by $\A$. Vectors in $\CC^N$ are written as boldface lowercase
letters \eg $\bx$, and matrices as boldface uppercase letters \eg
$\bba$. The $i$th element of a vector $\bx$ is denoted $x_i$. The
identity matrix of appropriate dimension is written as $\bbi$. For
a given matrix $\bba$, $\bba^T$, $\bba^H$ are its transpose and
conjugate transpose respectively, $\bba_\ell$ is its $\ell$th
column, and $\bba^\ell$ is the $\ell$th row. The standard
Euclidean norm is denoted $\|\bx\|_2=\sqrt{\bx^H\bx}$,
$\|\bx\|_1=\sum_i |x_i|$ is the $\ell_1$ norm of $\bx$, and
$\|\bx\|_0$ is the cardinality of $\bx$ namely the number of
non-zero elements.
 The complex conjugate of a complex number $a$ is denoted
$\overline{a}$. The Fourier transform of a signal $x(t)$ in $L_2$
is defined as $X(\omega)=\int_{-\infty}^\infty x(t)e^{-j\omega t}
dt$. We use the convention that upper case letters represent
Fourier transforms. The discrete-time Fourier transform (DTFT) of
a sequence $x[n]$ in $\ell_2$ is defined by
$X(e^{j\omega})=\sum_{n=-\infty}^\infty x[n]e^{-j\omega n}$. To
emphasize the fact that the DTFT is $2\pi$-periodic we use the
notation $X(e^{j\omega})$.

\section{Problem Formulation}
\label{sec:prob}

\subsection{Discrete Uncertainty Principles}

The generalized uncertainty principle is concerned with pairs of
representations of a vector $\bx \in \RR^N$ in two different
orthonormal bases \cite{DH01,EB02}. Suppose we have two
orthonormal bases for $\RR^N$: $\{\bphil_{\ell},1 \leq \ell \leq
N\}$ and $\{\bpsil_{\ell},1 \leq \ell \leq N\}$. Any vector $\bx$
in $\RR^N$ can then be decomposed uniquely in terms of each one of
these vector sets:
\begin{equation}
\label{eq:xn} \bx=\sum_{\ell=1}^N a_\ell
\bphil_{\ell}=\sum_{\ell=1}^N b_\ell \bpsil_{\ell}.
\end{equation}
Since the bases are orthonormal, the expansion coefficients are
given by $a_{\ell}=\bphil_{\ell}^T\bx$ and
$b_{\ell}=\bpsil_{\ell}^T\bx$. Denoting by $\bphi,\bpsi$ the
matrices with columns $\bphil_{\ell},\bpsil_{\ell}$ respectively,
(\ref{eq:xn}) can be written as $\bx=\bphi\ba=\bpsi\bb$, with
$\ba=\bphi^T\bx$ and $\bb=\bpsi^T\bx$.

The uncertainty relation sets limits on the sparsity of the
decomposition for any vector $\bx \in \RR^N$. Specifically, let
$A=\|\ba\|_0$ and $B=\|\bb\|_0$ denote the number of non-zero
elements in each one of the expansions. The generalized
uncertainty principle \cite{EB02,DH01} states that
\begin{equation}
\label{eq:ucd} \frac{1}{2}\bl A+B \br \geq \sqrt{AB} \geq
\frac{1}{\mu(\bphi,\bpsi)},
\end{equation}
where $\mu(\bphi,\bpsi)$ is the coherence between the bases
$\bphi$ and $\bpsi$ and is defined by
\begin{equation}
\label{eq:mud}
\mu(\bphi,\bpsi)=\max_{\ell,r}|\bphil^T_{\ell}\bpsil_r|.
\end{equation}
The coherence measures the similarity between basis elements. This
definition was introduced in \cite{MZ93} to heuristically
characterize the performance of matching pursuit, and later used
in \cite{DH01,EB02,GN03,T04} in order to analyze the basis pursuit
algorithm.

It can easily be shown that $1/\sqrt{N} \leq \mu(\bphi,\bpsi) \leq
1$ \cite{DH01}. The upper bound follows from the Cauchy-Schwarz
inequality and the fact that the bases elements have norm $1$. The
lower bound is the result of the fact that the matrix
$\bbm=\bphi^T\bpsi$ is unitary and consequently
$\bbm^T\bbm=\bbi_N$. This in turn implies that the sum of the
squared elements of $\bbm$ is equal to $N$. Since there are $N^2$
variables, the value of the largest cannot be smaller than
$1/\sqrt{N}$. The lower bound of $1/\sqrt{N}$ can be achieved, for
example, by choosing the two orthonormal bases as the spike
(identity) and Fourier bases \cite{DS89}. With this choice, the
uncertainty relation (\ref{eq:ucd}) becomes
\begin{equation}
\label{eq:ucdf} A+B  \geq 2\sqrt{AB} \geq 2\sqrt{N}.
\end{equation}
Assuming $\sqrt{N}$ is an integer, the relations in
(\ref{eq:ucdf}) are all satisfied with equality when $\bx$ is a
spike train with spacing $\sqrt{N}$, resulting in $\sqrt{N}$
non-zero elements. This follows from the fact that the discrete
Fourier transform of $\bx$ is also a spike train with the same
spacing. Therefore, $\bx$ can be decomposed both in time and in
frequency into $\sqrt{N}$ basis vectors.

As we discuss in Section~\ref{sec:rep}, the uncertainty relation
provides insight into how sparse a signal $\bx$ can be represented
in an overcomplete dictionary consisting of $\bphi$ and $\bpsi$.
It also sheds light on the ability to compute such decompositions
using computationally efficient algorithms. Most of the research
to date on sparse expansions has focused on the discrete setting
in which the goal is to represent a finite-length vector $\bx$ in
$\RR^N$ in terms of a given dictionary using as few elements as
possible. First general steps towards extending the notions and
ideas underlying sparse representations and compressed sensing to
the analog domain have been developed in
\cite{ME07,E08,ME09,GE09}. Here we would like to take a further
step in this direction by extending the discrete uncertainty
principle to the analog setting.

\subsection{Shift-Invariant Signal Expansions}

In order to develop a general framework for analog uncertainty
principles we first need to describe the set of signals we
consider. A popular model in signal and image processing are
signals that lie in SI spaces. A finitely generated SI subspace in
$L_2$ is defined as \cite{DDR94,GHM94,CE05}:
\begin{equation}
\label{eq:si} \A=\blc x(t)=\sum_{\ell=1}^N \sum_{n \in \ZZ}
a_\ell[n]\phi_\ell(t-nT): a_\ell[n] \in \ell_2 \brc.
\end{equation}
The functions $\phi_\ell(t)$ are referred to as the generators of
$\A$. Examples of SI spaces include multiband signals
\cite{ME07,ME09} and spline functions \cite{S73b,EM08}. Expansions
of the type (\ref{eq:si}) are also encountered in communication
systems, when the analog signal is produced by pulse amplitude
modulation. In the Fourier domain, we can represent any $x(t) \in
\A$ as
\begin{equation}
\label{eq:xeq} X(\omega)=\sum_{\ell=1}^N A_\ell(e^{j\omega
T})\Phi_\ell(\omega),
\end{equation}
where
\begin{equation}
\label{eq:dft} A_\ell(e^{j\omega T})=\sum_{n \in \ZZ}
a_\ell[n]e^{-j\omega nT}
\end{equation}
is the DTFT of $a_{\ell}[n]$ at frequency $\omega T$, and is
$2\pi/T$ periodic.

In order to guarantee a unique stable representation of any signal
in $\A$ by a sequence of coefficients $a_\ell[n]$, the generators
$\phi_\ell(t)$ are typically chosen such that the functions
$\{\phi_{\ell}(t-nT),n \in \ZZ,1 \leq \ell \leq N\}$ form a Riesz
basis for $L_2$. This means that there exist constants $\alpha>0$
and $\beta<\infty$ such that
\begin{equation}
\label{eq:riesz} \alpha \|\ba\|^2 \leq \left\| \sum_{\ell=1}^N
\sum_{n \in \ZZ} a_\ell[n]\phi_\ell(t-nT)\right\|^2\leq \beta
\|\ba\|^2,
\end{equation}
where $\|\ba\|^2=\sum_{\ell=1}^N \sum_{n \in \ZZ} |a_\ell[n]|^2$,
and the norm in the middle term is the standard $L_2$ norm.
Condition (\ref{eq:riesz}) implies that any $x(t) \in \A$ has a
unique and stable representation in terms of the sequences
$a_\ell[n]$. By taking Fourier transforms in (\ref{eq:riesz}) it
follows that the shifts of the generators $\phi_\ell(t)$  form a
Riesz basis if and only if \cite{GHM94}
\begin{equation}
\label{eq:rc} \alpha \bbi \preceq \bbm_{\phi\phi}(e^{j\omega})
\preceq \beta \bbi,\quad \mbox{a.e. }  \omega,
\end{equation}
where
\begin{equation}
\label{eq:M} \bbm_{\phi\phi}(e^{j\omega})=\left[\begin{array}{ccc}
R_{\phi_1\phi_1}\ejo & \ldots &
R_{\phi_1\phi_N}\ejo\\
\vdots & \vdots &  \vdots \\
R_{\phi_N\phi_1}\ejo & \ldots & R_{\phi_N\phi_N}\ejo
\end{array}
\right],
\end{equation}
and for any two functions $\phi(t),\psi(t)$ with Fourier
transforms $\Phi(\omega),\Psi(\omega)$,
\begin{equation}
\label{eq:R} R_{\phi\psi}\ejo= \frac{1}{T}\sum_{k \in \ZZ}
\overline{\Phi}\bl\frac{\omega}{T}-\frac{2\pi}{T}k \br
\Psi\bl\frac{\omega}{T}-\frac{2\pi}{T}k \br.
\end{equation}
Note that $R_{\phi\psi}\ejo$ is the DTFT of the cross correlation
sequence $r_{\phi\psi}[n]=\inner{\phi(t-nT)}{\psi(t)}$, where the
inner product on $L_2$ is defined as
\begin{equation}
\inner{s(t)}{x(t)}=\int_{ -\infty}^\infty \overline{s}(t)x(t)dt.
\end{equation}

In Section~\ref{sec:frame} we consider overcomplete signal
expansions in which more than $N$ generators $\phi_{\ell}(t)$ are
used to represent a signal $x(t)$ in $\A$. In this case
(\ref{eq:riesz}) can be generalized to allow for stable
overcomplete decompositions in terms of a frame for $\A$. The
functions $\{\psi_\ell(t-nT),n \in \ZZ,1 \leq \ell \leq M\}$ form
a frame for the SI space $\A$ if there exist constants $\alpha>0$
and $\beta<\infty$ such that
\begin{equation}
\label{eq:frame} \alpha \|x(t)\|_2^2 \leq \sum_{\ell=1}^M \sum_{n
\in \ZZ} |\inner{\psi_\ell(t-nT)}{x(t)}|^2\leq \beta \|x(t)\|_2^2
\end{equation}
for all $x(t) \in \A$, where $\|x(t)\|_2^2=\inner{x(t)}{x(t)}$.

Our main interest is in expansions of a signal $x(t)$ in a SI
subspace $\A$ of $L_2$ in terms of orthonormal bases for $\A$. The
generators $\{\phi_{\ell}(t)\}$ of $\A$ form an orthonormal
basis\footnote{Here and in the sequel, when we say that a set of
signals $\{\phi_\ell(t)\}$ form (or generate) a basis, we mean
that the basis functions are $\{\phi_\ell(t-nT),n \in \ZZ,1 \leq
\ell \leq N\}$.} if
\begin{equation}
\label{eq:ortht}
\inner{\phi_\ell(t-nT)}{\phi_{r}(t-mT)}=\delta_{nm}\delta_{\ell
r},
\end{equation}
\sloppy for all $\ell,r,n,m$, where $\delta_{nm}=1$ if $n=m$ and
$0$
 otherwise.
Since \sloppy
$\inner{\phi_\ell(t-nT)}{\phi_{r}(t-mT)}=\inner{\phi_\ell(t-(n-m)T)}{\phi_{r}(t)}$,
(\ref{eq:ortht}) is equivalent to
\begin{equation}
\label{eq:orthtt}
\inner{\phi_\ell(t-nT)}{\phi_{r}(t)}=\delta_{n0}\delta_{\ell r}.
\end{equation}
Taking the Fourier transform of (\ref{eq:orthtt}), the
orthonormality condition can be expressed in the Fourier domain as
\begin{equation}
\label{eq:orth} R_{\phi_\ell\phi_{r}}\ejo=\delta_{\ell r}.
\end{equation}

Given an orthonormal basis $\{\phi_{\ell}(t-nT)\}$ for $\A$, the
unique representation coefficients $a_\ell[n]$ in (\ref{eq:si})
are given by $a_{\ell}[n]=\inner{\phi_{\ell}(t-nT)}{x(t)}$. This
can be seen by taking the inner product of $x(t)$ in (\ref{eq:si})
with $\phi_r(t-mT)$ and using the orthogonality relation
(\ref{eq:ortht}). Evidently, computing the expansion coefficients
in an orthonormal decomposition is straightforward. There is also
a simple relationship between the energy of $x(t)$ and the energy
of the coefficient sequence in this case, as incorporated in the
following proposition:
\begin{proposition}
\label{prop:orth} Let $\{\phi_{\ell}(t),1 \leq \ell \leq N\}$
generate an orthonormal basis for a SI subspace $\A$, and let
$x(t)=\sum_{\ell=1}^N \sum_{n \in \ZZ} a_\ell[n]\phi_\ell(t-nT)$.
Then
\begin{equation}
\label{eq:normo}
\|x(t)\|^2=\frac{T}{2\pi}\int_0^{\frac{2\pi}{T}}\sum_{\ell=1}^N
\left|A_{\ell}\ejt\right|^2 d\omega,
\end{equation}
where $\|x(t)\|_2^2=\inner{x(t)}{x(t)}$ and $A_{\ell}\ejo$ is the
DTFT of $a_{\ell}[n]$.
\end{proposition}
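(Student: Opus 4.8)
The plan is to evaluate $\|x(t)\|^2=\inner{x(t)}{x(t)}$ directly in the Hilbert space $L_2$, exploiting the fact that $\{\phi_\ell(t-nT)\}$ is an orthonormal family. First I would substitute the expansion $x(t)=\sum_{\ell=1}^N\sum_{n\in\ZZ}a_\ell[n]\phi_\ell(t-nT)$ into both slots of the inner product and expand bilinearly, which yields a quadruple sum
\begin{equation}
\label{eq:quad}
\inner{x(t)}{x(t)}=\sum_{\ell,r=1}^N\sum_{n,m\in\ZZ}\overline{a_\ell[n]}\,a_r[m]\,\inner{\phi_\ell(t-nT)}{\phi_r(t-mT)}.
\end{equation}
The orthonormality condition (\ref{eq:ortht}) forces $\inner{\phi_\ell(t-nT)}{\phi_r(t-mT)}=\delta_{nm}\delta_{\ell r}$, so every cross term vanishes and (\ref{eq:quad}) collapses to the coefficient energy
\begin{equation}
\label{eq:collapse}
\|x(t)\|^2=\sum_{\ell=1}^N\sum_{n\in\ZZ}|a_\ell[n]|^2.
\end{equation}

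Next I would recognize the inner sum in (\ref{eq:collapse}) as the $\ell_2$ energy of each coefficient sequence and invoke Parseval's relation for the DTFT. Since $a_\ell[n]\in\ell_2$, we have $\sum_{n\in\ZZ}|a_\ell[n]|^2=\frac{1}{2\pi}\int_0^{2\pi}|A_\ell(e^{j\theta})|^2\,d\theta$, where $A_\ell$ is the DTFT of $a_\ell[n]$. Changing variables by $\theta=\omega T$, so that $d\theta=T\,d\omega$ and the interval $\theta\in[0,2\pi)$ maps to $\omega\in[0,2\pi/T)$, rewrites each term as $\frac{T}{2\pi}\int_0^{2\pi/T}|A_\ell\ejt|^2\,d\omega$. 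Summing over $\ell$ then gives exactly (\ref{eq:normo}).

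The main subtlety lies not in either of these computations but in justifying the passage from $x(t)$ to the termwise expansion (\ref{eq:quad}): the interchange of the infinite summations with the $L_2$ inner product must be legitimate. This is where I would spend the care. The interchange is valid because the series defining $x(t)$ converges in $L_2$ norm --- guaranteed by the orthonormal (hence Riesz) basis hypothesis together with $a_\ell[n]\in\ell_2$ --- and the inner product is continuous in each argument, so the convergent series may be moved inside. Equivalently, one can bypass the explicit interchange altogether by phrasing the argument as the generalized Parseval identity: an orthonormal expansion in a Hilbert space preserves norms, which renders the step from the expansion of $x(t)$ to (\ref{eq:collapse}) immediate. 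Either route reduces the proposition to the elementary DTFT Parseval identity and the change of variable described above.
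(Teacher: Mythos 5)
Your proof is correct, but it takes a genuinely different route from the paper's. You work entirely in the time domain: expand $\inner{x(t)}{x(t)}$ bilinearly, use the orthonormality relation (\ref{eq:ortht}) to collapse the quadruple sum to $\sum_{\ell}\sum_n |a_\ell[n]|^2$, and then convert to the frequency domain via the DTFT Parseval identity and the substitution $\theta=\omega T$. The paper instead works in the continuous-time Fourier domain from the start: it writes $\|x\|^2$ as $\frac{1}{2\pi}\int_{-\infty}^{\infty}|X(\omega)|^2\,d\omega$ with $X(\omega)=\sum_\ell A_\ell\ejt\Phi_\ell(\omega)$ from (\ref{eq:xeq}), folds the integral over $\RR$ into an integral over a single period $[0,2\pi/T]$ via (\ref{eq:integrals}), exploits the $2\pi/T$-periodicity of $A_\ell\ejt$ to pull the coefficients out of the periodization sum, recognizes the remaining sum as $R_{\phi_\ell\phi_r}\ejo$ from (\ref{eq:R}), and invokes the Fourier-domain orthonormality condition (\ref{eq:orth}). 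Your argument is the more elementary of the two and isolates the one analytic subtlety correctly (the interchange of the $L_2$-convergent series with the inner product, justified by continuity of the inner product); the abstract-Parseval phrasing you offer as an alternative makes the first step essentially free. What the paper's route buys is the periodization identity (\ref{eq:integrals}) and the pattern of converting products $\overline{\Phi}_\ell\Psi_r$ into sampled cross-correlations $R_{\phi_\ell\psi_r}\ejo$, both of which are reused verbatim in the proof of Theorem~\ref{thm:uncertainty}; your route would require establishing that machinery separately when it is needed there.
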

\begin{proof}
See Appendix~\ref{app:energy}.
\end{proof}

\subsection{Analog Problem Formulation}

In the finite-dimensional setting, sparsity is defined in terms of
the number of non-zero expansion coefficients in a given basis. In
an analog decomposition of the form (\ref{eq:model}), there are in
general infinitely many coefficients so that it is not immediately
clear how to define the notion of analog sparsity.

In our development, analog sparsity is measured by the number of
generators needed to represent $x(t)$. In other words, some of the
sequences $a_{\ell}[n]$ in (\ref{eq:model}) may be identically
zero, in which case
\begin{equation}
x(t)=\sum_{|\ell|=A} \sum_{n \in \ZZ} a_\ell[n]\phi_\ell(t-nT),
\end{equation}
where the notation $|\ell|=A$ means a sum over at most $A$
elements. Evidently, in our definition, sparsity is determined by
the energy of the entire sequence $a_{\ell}[n]$ and not by the
values of the individual elements.

In general, the number of zero sequences depends on the choice of
basis. Suppose we have an alternative representation
\begin{equation}
x(t)=\sum_{|\ell|=B} \sum_{n \in \ZZ} b_\ell[n]\psi_\ell(t-nT),
\end{equation}
where $\{\psi_{\ell}(t)\}$ also generate an orthonormal basis for
$\A$. An interesting question is whether there are limitations on
$A$ and $B$. In other words, can we have two representations that
are simultaneously sparse so that both $A$ and $B$ are small? This
question is addressed in the next section and leads to an analog
uncertainty principle, similar to (\ref{eq:ucd}). In
Section~\ref{sec:example} we prove that the relation we obtain is
tight, by constructing an example in which the lower limits are
satisfied.

As in the discrete setting we expect to be able to use fewer
generators in a SI expansion by allowing for an overcomplete
dictionary. In particular, if we expand $x(t)$ using both sets of
orthonormal bases we may be able to reduce the number of sequences
in the decomposition beyond what can be achieved using each basis
separately. The problem is how to find a sparse representation in
the joint dictionary in practice. Even in the discrete setting
this problem is NP-complete. However, results of
\cite{EB02,DE03,GN03,T04} show that under certain conditions a
sparse expansion can be determined by solving a convex
optimization problem. Here we have an additional essential
complication due to the fact that the problem is defined over an
infinite domain so that it has  infinitely many variables and
infinitely many constraints. In Section~\ref{sec:rep} we show that
despite the combinatorial complexity and infinite dimensions of
the problem, under certain conditions on the bases functions, we
can recover a sparse decomposition by solving a finite-dimensional
convex optimization problem.

\section{Uncertainty Relations in SI Spaces}
\label{sec:ur}
We begin by developing an analog of the discrete uncertainty
principle for signals $x(t)$ in SI subspaces. Specifically, we
show that the minimal number of sequences required to express
$x(t)$ in terms of any two orthonormal bases has to satisfy the
same inequality (\ref{eq:ucd}) as in the discrete setting, with an
appropriate modification of the coherence measure.
\begin{theorem}
\label{thm:uncertainty} Suppose we have a signal $x(t) \in \A$
where $\A$ is a SI subspace of $L_2$. Let $\{\phi_{\ell}(t),1 \leq
\ell \leq N\}$ and $\{\psi_{\ell}(t),1 \leq \ell \leq N\}$ denote
two orthonormal generators of $\A$, so that $x(t)$ can be
expressed in both bases with coefficient sequences
$a_{\ell}[n],b_{\ell}[n]$:
\begin{equation}
x(t)=\sum_{|\ell|=A} \sum_{n \in \ZZ} a_\ell[n]\phi_\ell(t-nT)=
\sum_{|\ell|=B} \sum_{n \in \ZZ} b_\ell[n]\psi_\ell(t-nT).
\end{equation}
Then,
\begin{equation}
\label{eq:uca} \frac{1}{2}(A+B)\geq \sqrt{AB} \geq
\frac{1}{\mu(\Phi,\Psi)},
\end{equation}
where
\begin{equation}
\label{eq:mu} \mu(\Phi,\Psi)=\max_{\ell,r}\ess \sup_{\omega}
\left| R_{\phi_{\ell}\psi_r}\ejo \right|,
\end{equation}
and $R_{\phi\psi}\ejo$ is defined by (\ref{eq:R}).
\end{theorem}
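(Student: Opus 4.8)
The plan is to mirror the finite-dimensional proof of \eqref{eq:ucd}, translating each algebraic step into its shift-invariant analog via the spectral quantities $R_{\phi\psi}\ejo$. The arithmetic-geometric mean inequality $\frac{1}{2}(A+B)\geq\sqrt{AB}$ is immediate, so the entire content lies in establishing $\sqrt{AB}\geq 1/\mu(\Phi,\Psi)$. To do this, I would first express the two sets of coefficient sequences in terms of one another. Since both $\{\phi_\ell(t-nT)\}$ and $\{\psi_\ell(t-nT)\}$ are orthonormal bases for the same space $\A$, and $a_\ell[n]=\inner{\phi_\ell(t-nT)}{x(t)}$, $b_\ell[n]=\inner{\psi_\ell(t-nT)}{x(t)}$, substituting the $\psi$-expansion of $x(t)$ into the formula for $a_\ell[n]$ yields a relation of convolution type, which in the DTFT domain becomes a clean pointwise matrix identity:
\begin{equation}
\label{eq:plan-relation}
A_\ell\ejt=\sum_{r=1}^N R_{\phi_\ell\psi_r}\ejo\, B_r\ejt.
\end{equation}
Collecting the $A_\ell$ and $B_r$ into vectors $\bba\ejo,\bbb\ejo\in\CC^N$ and writing $\bbm\ejo$ for the matrix with entries $R_{\phi_\ell\psi_r}\ejo$, this reads $\bba\ejo=\bbm\ejo\,\bbb\ejo$ for almost every $\omega$. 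The key structural fact, which I would verify using the orthonormality conditions \eqref{eq:orth} applied to both bases, is that $\bbm\ejo$ is unitary for almost every $\omega$ (this is the exact analog of $\bbm=\bphi^T\bpsi$ being unitary in the finite case).

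Next I would bound the energies. Using Proposition~\ref{prop:orth}, the squared norm $\|x(t)\|^2$ equals $\frac{T}{2\pi}\int_0^{2\pi/T}\sum_\ell|A_\ell\ejt|^2\,d\omega$, and identically equals the same integral of $\sum_r|B_r\ejt|^2$; in particular these two integrals coincide. From \eqref{eq:plan-relation} and $|R_{\phi_\ell\psi_r}\ejo|\leq\mu(\Phi,\Psi)$ I would estimate $|A_\ell\ejt|$ pointwise. The crucial observation is that the sum in \eqref{eq:plan-relation} ranges only over the $B$ active $\psi$-generators (those $r$ for which $B_r$ is not identically zero), so by Cauchy--Schwarz,
\begin{equation}
\label{eq:plan-cs}
|A_\ell\ejt|^2\leq B\,\mu^2(\Phi,\Psi)\sum_{r:\,B_r\not\equiv 0}|B_r\ejt|^2
\end{equation}
for each active index $\ell$. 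Summing \eqref{eq:plan-cs} over the $A$ active $\phi$-generators, then integrating over one period and applying Proposition~\ref{prop:orth} to both sides, the common energy factor $\|x(t)\|^2$ cancels (assuming $x(t)\not\equiv 0$, which I would note is the only nontrivial case) and leaves $1\leq AB\,\mu^2(\Phi,\Psi)$, i.e. $\sqrt{AB}\geq 1/\mu(\Phi,\Psi)$, as required.

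The main obstacle I anticipate is the careful handling of the restricted index sets together with the essential-supremum definition of $\mu$ in \eqref{eq:mu}. Unlike the finite case where a single inner product bounds each entry, here I must argue that the pointwise bound $|R_{\phi_\ell\psi_r}\ejo|\leq\mu(\Phi,\Psi)$ holds for almost every $\omega$ and may be pulled out of the integral uniformly; the essential supremum is precisely what makes this legitimate, so I would be explicit that the estimate \eqref{eq:plan-cs} holds a.e.\ rather than pointwise everywhere. A secondary subtlety is confirming that the active-generator counting genuinely restricts the sum: a generator with $B_r[n]$ not identically zero still has $B_r\ejt$ possibly vanishing on a set of positive measure, but since \eqref{eq:plan-cs} is integrated this causes no difficulty. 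I would also remark that the cancellation of $\|x(t)\|^2$ is where the analog argument differs cosmetically from the finite one---rather than normalizing a vector, one divides through by the (finite, nonzero) signal energy guaranteed by the Riesz basis condition \eqref{eq:riesz}.
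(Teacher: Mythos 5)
Your proof is correct, but it follows a genuinely different route from the paper's. The paper normalizes $\|x(t)\|_2=1$ and expands the single quantity $1=\frac{1}{2\pi}\int_{-\infty}^{\infty}|X(\omega)|^2d\omega$ as a \emph{mixed} bilinear form, writing $\overline{X}(\omega)$ in the $\Phi$ basis and $X(\omega)$ in the $\Psi$ basis; after folding the integral onto one period this produces the double sum $\sum_{\ell,r}\overline{A}_{\ell}\ejt B_{r}\ejt R_{\phi_{\ell}\psi_{r}}\ejt$, which is bounded by $\mu(\Phi,\Psi)$ times $\bl\sum_{|\ell|=A}|A_{\ell}|\br\bl\sum_{|r|=B}|B_{r}|\br$ and then attacked with two applications of Cauchy--Schwarz --- one on the $\omega$-integral and one on each finite sum --- to extract the factors $\sqrt{A}$ and $\sqrt{B}$. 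You instead work with the change-of-basis operator: the identity $A_{\ell}\ejt=\sum_{r}R_{\phi_{\ell}\psi_{r}}\ejt B_{r}\ejt$ (which the paper only derives later, in the proofs of Theorem~\ref{thm:mu} and Proposition~\ref{prop:formm}, in the conjugated form $A_r^{\ell}=\overline{R}_{\phi_{\ell}\psi_{r}}$), a single pointwise Cauchy--Schwarz over the $B$ active columns, and Parseval (Proposition~\ref{prop:orth}) applied separately to each basis. Your version is arguably more economical: the factor $A$ comes for free from summing the same bound over the $A$ active rows, so the integral-level Cauchy--Schwarz disappears entirely. Two small remarks. First, the unitarity of the matrix $\left[R_{\phi_{\ell}\psi_{r}}\ejo\right]$, which you single out as ``the key structural fact,'' is never actually used in your argument --- only the a.e.\ entrywise bound $|R_{\phi_{\ell}\psi_{r}}\ejo|\le\mu(\Phi,\Psi)$ and the orthonormality of each basis separately enter; you could delete that claim without loss. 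Second, your explicit treatment of the essential supremum and of the degenerate case $x\equiv 0$ addresses points the paper leaves implicit in its normalization $\|x(t)\|_2=1$.
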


The coherence $\mu(\Phi,\Psi)$ of (\ref{eq:mu}) is a
generalization of the notion of discrete coherence (\ref{eq:mud})
defined for finite-dimensional bases. To see the analogy, note
that $R_{\phi\psi}\ejo$ is the DTFT of the correlation sequence
$r_{\phi\psi}[n]=\inner{\phi(t-nT)}{\psi(t)}$. On the other hand,
the finite-dimensional coherence can be written as
$\mu(\bphi,\bpsi)=(1/N)\max_{\ell,r}|\hat{\bphil}^H_{\ell}\hat{\bpsil}_r|$,
where $\hat{\bx}$ is the discrete Fourier transform of $\bx$ and
$N$ is the length of $\bx$.

\begin{proof}
Without loss of generality, we assume that $\|x(t)\|_2=1$. Since
$\{\phi_{\ell}(t)\}$ and $\{\psi_{\ell}(t)\}$ both generate
orthonormal bases, we have from Proposition~\ref{prop:orth} that
\begin{eqnarray}
\label{eq:normph} 1 & = & \frac{T}{2\pi}\int_0^{\frac{2\pi}{T}}
\sum_{|\ell|=A} |A_\ell\ejt|^2 d\omega \nonumber \\
& = & \frac{T}{2\pi}\int_0^{\frac{2\pi}{T}} \sum_{|\ell|=B}
|B_\ell\ejt|^2 d\omega.
\end{eqnarray}

Using the norm constraint and expressing $X(\omega)$ once in terms
of $\Phi_{\ell}(\omega)$ and once in terms of
$\Psi_{\ell}(\omega)$:
\begin{eqnarray}
\label{eq:mixed} \lefteqn{1 = \frac{1}{2\pi}\int_{-\infty}^\infty
|X(\omega)|^2d\omega} \nonumber \\
& = &
 \frac{1}{2\pi}\int_{-\infty}^\infty
\sum_{\substack{|\ell|=A \\ |r|=B}} \overline{A}_{\ell}\ejt
B_{r}\ejt \overline{\Phi}_{\ell}(\omega) \Psi_{r}(\omega) d\omega
\nonumber \\
&  =
&\frac{T}{2\pi}\int_0^{\frac{2\pi}{T}}\sum_{\substack{|\ell|=A \\
|r|=B}} \overline{A}_{\ell}\ejt B_{r}\ejt R_{\phi_{\ell}\psi_{r}}
\ejt
 d\omega \nonumber \\
& \leq & \frac{T}{2\pi}\int_0^{\frac{2\pi}{T}}
\sum_{\substack{|\ell|=A \\ |r|=B}}  \left| A_{\ell}\ejt\right|
\left| B_{r}\ejt \right|
\left|R_{\phi_{\ell}\psi_{r}}\ejt\right|  d\omega \nonumber \\
& \leq & \frac{\mu(\Phi,\Psi) T}{2\pi}\int_0^{\frac{2\pi}{T}}
\sum_{|\ell|=A} \left| A_{\ell}\ejt\right| \sum_{|r|=B} \left|
B_{r}\ejt \right|
 d\omega.
\end{eqnarray}
The third equality follows from rewriting the integral over the
entire real line as the sum of integrals over intervals of length
$2\pi/T$ as in (\ref{eq:integrals}) in Appendix~\ref{app:energy},
and the second inequality is a result of (\ref{eq:mu}).
 Applying the Cauchy-Schwarz inequality
to the integral in (\ref{eq:mixed}) we have
\begin{eqnarray}
\label{eq:csm} \lefteqn{\left[\int_0^{\frac{2\pi}{T}}
\sum_{|\ell|=A} \left| A_{\ell}\ejt\right| \sum_{|r|=B} \left|
B_{r}\ejt \right|
 d\omega\right]^{2}} \nonumber \\  && \hspace*{-0.2in}\leq  \int_0^{\frac{2\pi}{T}} \bl \sum_{|\ell|=A}
\left| A_{\ell}\ejt\right|\br^2
 d\omega
\int_0^{\frac{2\pi}{T}} \bl \sum_{\ell=1}^B \left|
B_{\ell}\ejt\right|\br^2
 d\omega .
\end{eqnarray}
Using the same inequality we can upper bound the sum in
(\ref{eq:csm}):
\begin{equation}
\bl\sum_{|\ell|=A} \left| A_{\ell}\ejt\right|\br^2
 \leq A \sum_{|\ell|=A}
\left| A_{\ell}\ejt\right|^2.
\end{equation}
Combining with (\ref{eq:csm}), (\ref{eq:mixed}) and
(\ref{eq:normph}) leads to
\begin{eqnarray}
1 \leq  \mu(\Phi,\Psi) \sqrt{AB}.
\end{eqnarray}
Using the well-known relation between the arithmetic and geometric
means completes the proof.
\end{proof}

An interesting question is how small $\mu(\Phi,\Psi)$ can be made
by appropriately choosing the bases. From
Theorem~\ref{thm:uncertainty} the smaller $\mu(\Phi,\Psi)$, the
stronger the restriction on the sparsity in both decompositions.
As we will see in Section~\ref{sec:rep}, such a limitation is
helpful in recovering the true sparse coefficients. In the finite
setting we have seen that $1/\sqrt{N} \leq \mu(\bphi,\bpsi) \leq
1$ \cite{DH01}. The next theorem shows that the same bounds hold
in the analog case.
\begin{theorem}
\label{thm:mu} Let $\{\phi_{\ell}(t),1 \leq \ell \leq N\}$ and
$\{\psi_{\ell}(t),1 \leq \ell \leq N\}$ denote two orthonormal
generators of a SI subspace $\A \subset L_2$ and let
$\mu(\Phi,\Psi)=\max_{\ell,r}\ess \sup \left|
R_{\phi_{\ell}\psi_r}\ejo \right|$, where $R_{\phi\psi}\ejo$ is
defined by (\ref{eq:R}). Then
\begin{equation}
\label{eq:bound}  \frac{1}{\sqrt{N}} \leq \mu(\Phi,\Psi) \leq 1.
\end{equation}
\end{theorem}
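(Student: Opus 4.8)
The plan is to bound the $N \times N$ matrix $\bbm_{\phi\psi}\ejo$ whose $(\ell,r)$ entry is $R_{\phi_\ell\psi_r}\ejo$, in direct analogy with the matrix $\bbm_{\phi\phi}\ejo$ of (\ref{eq:M}), and to treat the two inequalities in (\ref{eq:bound}) separately.

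For the upper bound I would apply the Cauchy--Schwarz inequality to the aliasing sum (\ref{eq:R}) defining $R_{\phi_\ell\psi_r}\ejo$, which yields
\[
|R_{\phi_\ell\psi_r}\ejo| \le \sqrt{R_{\phi_\ell\phi_\ell}\ejo}\,\sqrt{R_{\psi_r\psi_r}\ejo}
\]
for a.e. $\omega$. Since both $\{\phi_\ell\}$ and $\{\psi_r\}$ generate orthonormal bases, the orthonormality condition (\ref{eq:orth}) gives $R_{\phi_\ell\phi_\ell}\ejo = R_{\psi_r\psi_r}\ejo = 1$, so every entry is bounded by $1$ almost everywhere and hence $\mu(\Phi,\Psi)\le 1$.

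For the lower bound, the key step is to show that $\bbm_{\phi\psi}\ejo$ is unitary for a.e. $\omega$, mirroring the discrete fact that $\bphi^T\bpsi$ is unitary. Since $\{\psi_r(t-nT)\}$ generates $\A$ and each $\phi_\ell \in \A$, I would expand $\Phi_\ell(\omega)=\sum_{r} C_{\ell r}\ejt \Psi_r(\omega)$, where $C_{\ell r}\ejo$ is the DTFT of the expansion sequence $\inner{\psi_r(t-nT)}{\phi_\ell(t)}$, so that $C_{\ell r}\ejo = R_{\psi_r\phi_\ell}\ejo = \overline{R_{\phi_\ell\psi_r}\ejo}$ by the Hermitian symmetry $R_{\psi\phi}\ejo=\overline{R_{\phi\psi}\ejo}$. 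Substituting this expansion into (\ref{eq:R}) for $R_{\phi_\ell\phi_{\ell'}}\ejo$ and using that the $2\pi$-periodic factors $C_{\ell r}\ejo$ are constant in the aliasing index $k$ and therefore pull out of the sum, one is left, after invoking (\ref{eq:orth}) for the $\psi$ basis, with $R_{\phi_\ell\phi_{\ell'}}\ejo = \sum_{r}\overline{C_{\ell r}\ejo}\,C_{\ell' r}\ejo$. Orthonormality of $\{\phi_\ell\}$ then forces $\bbc\ejo\bbc^H\ejo=\bbi$ for a.e. $\omega$, and since $\bbc\ejo$ is the entrywise conjugate of $\bbm_{\phi\psi}\ejo$, the matrix $\bbm_{\phi\psi}\ejo$ is unitary a.e.

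Granting unitarity, the conclusion is routine: the squared Frobenius norm of an $N \times N$ unitary matrix equals $N$, so $\sum_{\ell,r}|R_{\phi_\ell\psi_r}\ejo|^2 = N$ for a.e. $\omega$, and because there are $N^2$ entries the largest must satisfy $\max_{\ell,r}|R_{\phi_\ell\psi_r}\ejo|\ge 1/\sqrt{N}$ a.e. As the index set is finite, $\max_{\ell,r}\ess\sup_\omega$ and $\ess\sup_\omega\max_{\ell,r}$ coincide, giving $\mu(\Phi,\Psi)\ge 1/\sqrt{N}$. I expect the unitarity step to be the main obstacle: the delicate point is justifying that the periodic factors legitimately factor out of the aliasing sum and that the resulting identity holds pointwise a.e. in $\omega$---rather than merely in an integrated sense against the coefficient sequences---which is exactly what converts the sequence-domain orthonormality relations into a pointwise matrix identity.
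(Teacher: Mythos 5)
Your proof is correct. The upper bound is identical to the paper's (Cauchy--Schwarz on the aliasing sum plus the orthonormality condition (\ref{eq:orth})). For the lower bound you and the paper start the same way---expanding $\phi_\ell$ in the $\psi$-basis and identifying the expansion coefficients' DTFT with $\overline{R}_{\phi_\ell\psi_r}\ejo$---but you then finish differently: you push through to the \emph{pointwise} statement that $\bbm_{\phi\psi}\ejo$ is unitary for a.e.\ $\omega$, so its squared Frobenius norm equals $N$ pointwise and the largest entry is at least $1/\sqrt{N}$ a.e., whereas the paper only uses the \emph{integrated} row normalization $\frac{T}{2\pi}\int_0^{2\pi/T}\sum_r|A_r^\ell\ejt|^2\,d\omega=1$ (its (\ref{eq:an}), a consequence of Proposition~\ref{prop:orth}) and then argues by contradiction: if $\mu(\Phi,\Psi)<1/\sqrt{N}$ the integral of $\sum_{\ell,r}|R_{\phi_\ell\psi_r}\ejo|^2$ over $[0,2\pi]$ would fall short of $2\pi N$. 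Your unitarity lemma is in fact exactly part 1 of the paper's Proposition~\ref{prop:formm}, proved in Appendix~\ref{app:formm} by the same computation you sketch, so nothing in your argument is outside the paper's toolkit; it just front-loads a stronger intermediate fact. What your route buys is a cleaner, contradiction-free finish that mirrors the discrete argument ($\bphi^T\bpsi$ unitary $\Rightarrow$ largest entry $\ge 1/\sqrt{N}$) and makes the $\ess\sup$ bookkeeping transparent (your observation that $\max_{\ell,r}\ess\sup_\omega=\ess\sup_\omega\max_{\ell,r}$ over a finite index set is correct); what the paper's route buys is that it needs only the integrated Parseval identity already established in Proposition~\ref{prop:orth}, avoiding any pointwise interchange of the aliasing sum with the coefficient expansion. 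The interchange you worry about is harmless here: the periodic factors $C_{\ell r}(e^{j\nu T})$ evaluated at $\nu=\omega/T-2\pi k/T$ are independent of $k$, and the paper performs the identical manipulation in (\ref{eq:or}).
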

\begin{proof}
We begin by proving the upper bound, which follows immediately
from the Cauchy-Schwarz inequality and the orthonormality of the
bases:
\begin{equation}
\left|R_{\phi_{\ell}\psi_r}\ejo \right|\leq \bl
R_{\phi_{\ell}\phi_{\ell}}\ejo R_{\psi_r\psi_r}\ejo\br^{1/2}=1,
\end{equation}
where the last equality is a result of (\ref{eq:orth}). Therefore,
$\mu(\Phi,\Psi) \leq 1$.

To prove the lower bound, note that since $\phi_{\ell}(t)$ is in
$\A$ for each $\ell$, we can express it as
\begin{equation}
\phi_{\ell}(t)=\sum_{r=1}^N \sum_{n \in \ZZ}
a_r^{\ell}[n]\psi_r(t-nT)
\end{equation}
for some coefficients $a_r^{\ell}[n]$, or in the Fourier domain,
\begin{equation}
\label{eq:mup} \Phi_{\ell}(\omega)=\sum_{r=1}^N  A_r^{\ell}\ejt
\Psi_r(\omega).
\end{equation}
Since $\|\phi_\ell(t)\|=1$ and $\{\psi_r(t)\}$ are orthonormal, we
have from Proposition~\ref{prop:orth} that
\begin{equation}
\label{eq:an} \frac{T}{2\pi}\int_0^{\frac{2\pi}{T}} \sum_{r=1}^N
\left| A_{r}^{\ell}\ejt\right|^2
 d\omega=1,\quad 1 \leq \ell \leq N.
\end{equation}
Now, using (\ref{eq:mup}) and the orthonormality condition
(\ref{eq:orth}) it follows that
\begin{eqnarray}
\label{eq:mub} \mu(\Phi,\Psi) & \geq &
\left|R_{\phi_{\ell}\psi_{r}}\ejo
\right| \nonumber \\
&= &\left|  \sum_{s=1}^N \overline{A}_s^{\ell}\ejo
R_{\psi_{s}\psi_{r}}\ejo \right|= \left|A_r^{\ell}\ejo\right|.
\end{eqnarray}
Therefore,
\begin{eqnarray}
\label{eq:snorm}
\lefteqn{\int_{0}^{2\pi}\sum_{\ell,r=1}^N\left|R_{\phi_{\ell}\psi_{r}}\ejo
\right|^2 d\omega =} \nonumber \\
& = &
\sum_{\ell=1}^N\int_{0}^{2\pi}\sum_{r=1}^N\left|A_{r}^\ell\ejo
\right|^2 d\omega =2\pi N,
\end{eqnarray}
where the last equality follows from (\ref{eq:an}) by performing a
change of variables $\omega'=\omega T$ in the integral.
 If
$\mu(\Phi,\Psi)<1/\sqrt{N}$, then,
$|R_{\phi_{\ell}\psi_{r}}\ejo|<1/\sqrt{N}$ a.e. on $\omega$ and
\begin{equation}
\int_{0}^{2\pi}\sum_{\ell,r=1}^N\left|R_{\phi_{\ell}\psi_{r}}\ejo
\right|^2  d\omega <2\pi N,
\end{equation}
which contradicts (\ref{eq:snorm}).
\end{proof}

It is easy to see that the lower bound in (\ref{eq:bound}) is
achieved  if $R_{\phi_{\ell}\psi_{r}}\ejo=1/\sqrt{N}$ for all
$\ell,r$ and $\omega$. In this case the uncertainty relation
(\ref{eq:uca}) becomes
\begin{equation}
\label{eq:ucl} A+B \geq 2\sqrt{AB} \geq 2\sqrt{N}.
\end{equation}
As discussed in Section~\ref{sec:prob}, in the discrete setting
with $\sqrt{N}$ an integer, the inequalities in (\ref{eq:ucl}) are
achieved using the spike-Fourier basis and $\bx$ equal to a spike
train. In the next section we show that equality in (\ref{eq:ucl})
can be satisfied in the analog case as well using a pair of bases
that is analogous to the spike-Fourier pair, and a bandlimited
signal $x(t)$ equal to a lowpass train.

\section{Achieving the Uncertainty Principle} \label{sec:example}

\subsection{Minimal Coherence} \label{sec:coherence}

Consider the space $\A$ of real signals bandlimited to $(-\pi
N/T,\pi N/T]$. As we show below, any signal in $\A$ can be
expressed in terms of $N$ SI generators. We would like to choose
two orthonormal bases, analogous to the spike-Fourier pair in the
finite setting, for which the coherence achieves its lower limit
of $1/\sqrt{N}$. To this end, we first highlight the essential
properties of the finite spike-Fourier bases in $\CC^N$, and then
choose an analog pair with similar characteristics.

The basic properties of the spike-Fourier pair are illustrated in
Fig.~\ref{fig:ifd}. The first element of the spike basis,
$\bphil_1$, is equal to a constant in the discrete Fourier domain,
as illustrated in the left-hand side of Fig.~\ref{fig:ifd}. The
remaining basis vectors are generated by shifts in time, or
modulations in frequency, as depicted in the bottom part of the
figure. In contrast, the first vector of the Fourier basis is
sparse in frequency: it is represented by a single frequency
component as illustrated in the right-hand side of the figure. The
rest of the basis elements are obtained by shifts in frequency.
\begin{figure*}
\centering
\includegraphics[scale=.8]{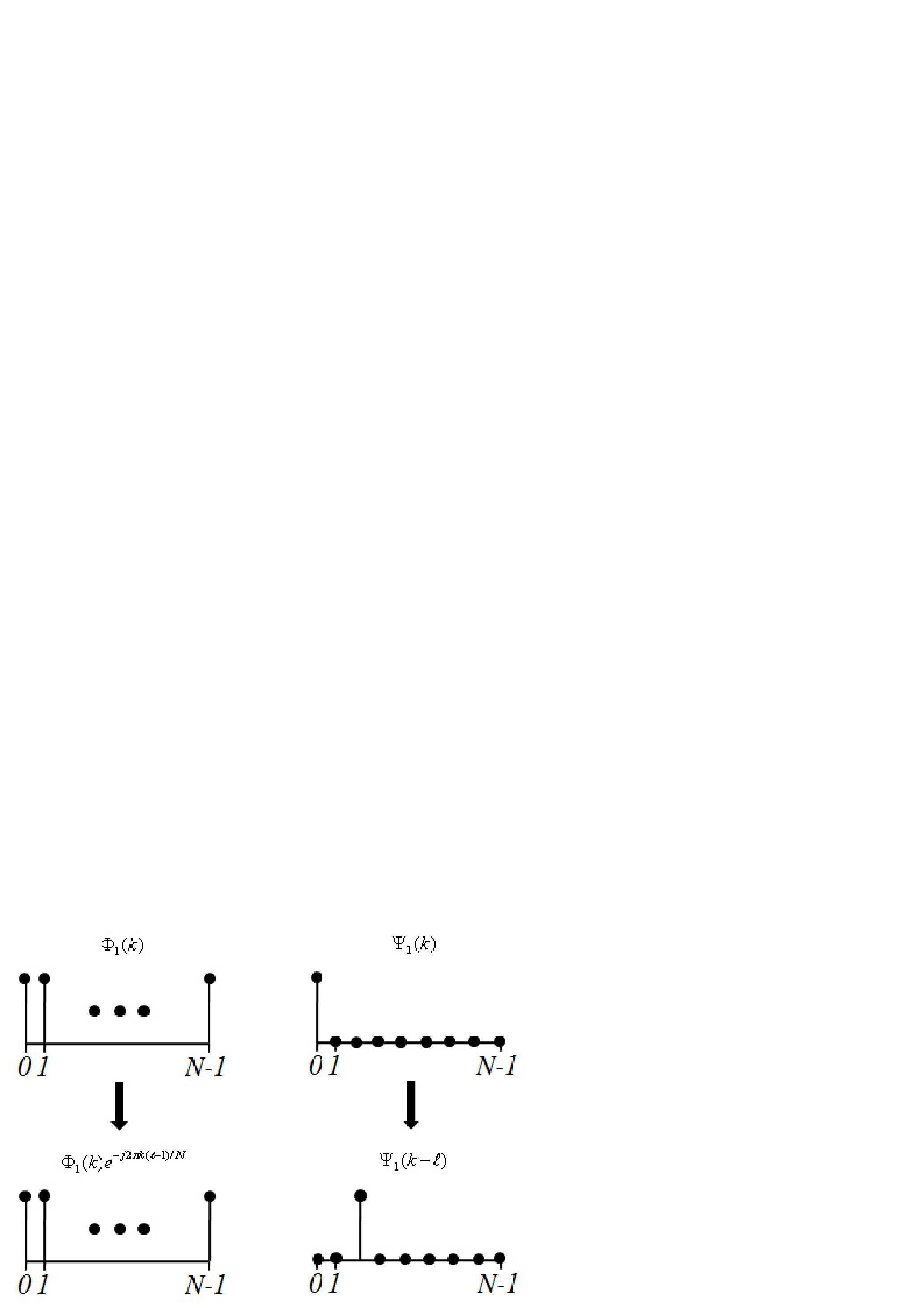}
\caption{Discrete Fourier-domain representation of the
spike-Fourier bases in $\CC^N$. The left-hand side is the discrete
Fourier transform of the spike basis. The right-hand side
represents the discrete Fourier transform of the Fourier basis.
The top row corresponds to the first basis function, while the
bottom row represents the $\ell$th basis function.}
\label{fig:ifd}
\end{figure*}

We now construct two orthonormal bases for $\A$ with minimal
coherence by mimicking these properties in the continuous-time
Fourier domain. Since we are considering the class of signals
bandlimited to $\pi N/T$, we only treat this frequency range. As
we have seen, the basic element of the spike basis occupies the
entire frequency spectrum. Therefore, we choose our first analog
generator $\phi_{1}(t)$ to be constant over the frequency range
$(-\pi N/T,\pi N/T]$.  The remaining generators are obtained by
shifts in time of $\phi_{1}(t)$ or modulations in frequency:
\begin{equation}
\label{eq:psie} \Phi_{\ell}(\omega)=\left\{
\begin{array}{ll}
\sqrt\frac{T}{N} e^{-j \omega (\ell-1) T/N}, & \omega \in (-\pi N/T,\pi N/T]; \\
0, & \mbox{otherwise},
\end{array}
\right.
\end{equation}
corresponding to
\begin{equation}
\label{eq:psit} \phi_{\ell}(t)=\sqrt{\frac{N}{T}}\sinc((t-(\ell-1)
T')/T'),
\end{equation}
with $T'=T/N$. The normalization constant is chosen to ensure that
the basis vectors have unit norm. With slight abuse of
terminology, we refer to the set $\{\phi_{\ell}(t),1 \leq \ell
\leq N\}$ as the analog spike basis (the basis is actually
constructed by shifts of this set with period $T$). Note that the
samples of $\phi_{\ell}(t)$ at times $nT'$ create a shifted spike
sequence, further justifying the analogy. The Fourier transform of
the analog spike basis is illustrated in the left-hand side of
Fig.~\ref{fig:ifa}.
\begin{figure*}
\centering
\includegraphics[scale=.8]{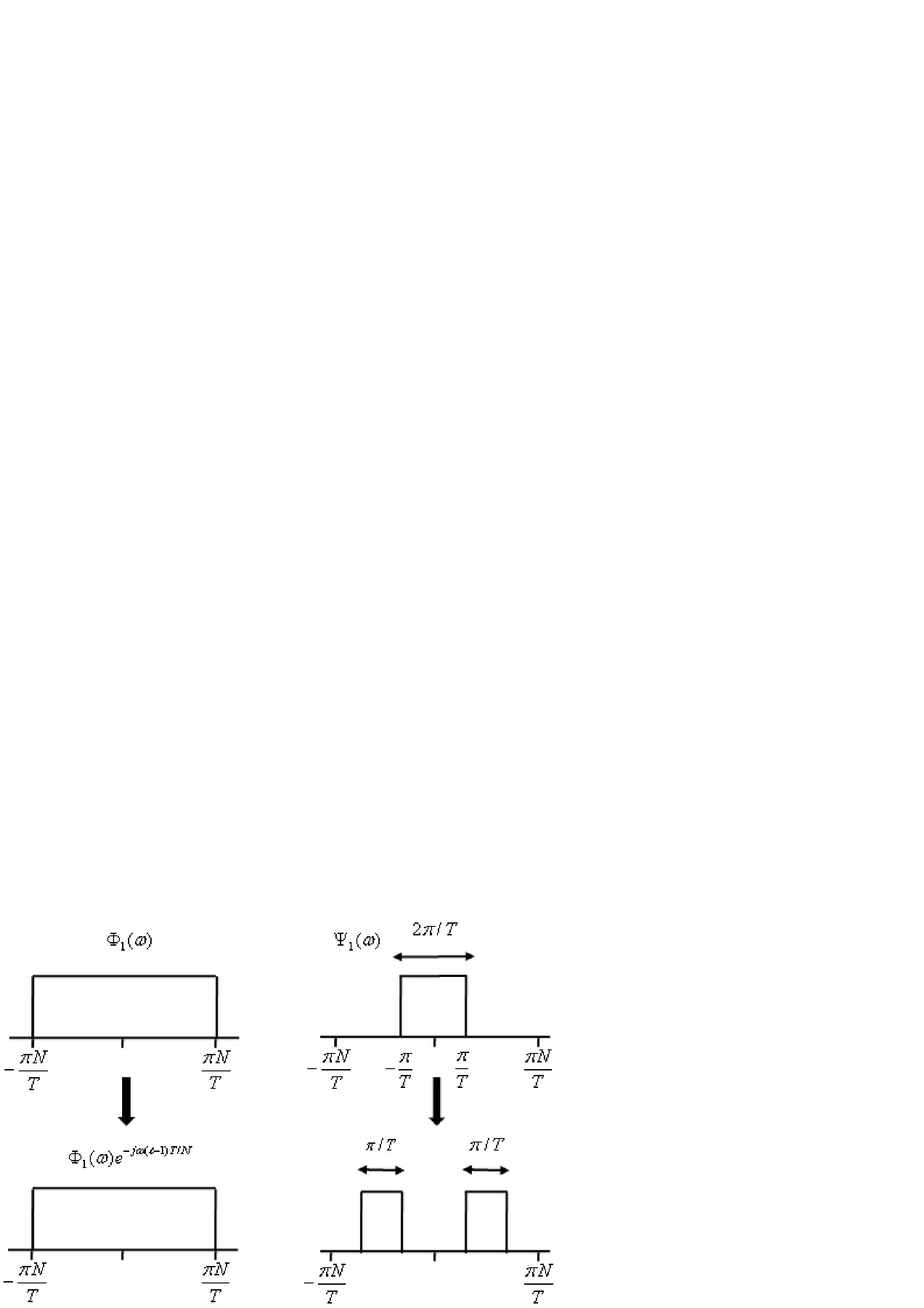}
\caption{Continuous Fourier-domain representation of the analog
spike-Fourier bases in $\A$. The left-hand side is the Fourier
transform of the spike basis. The right-hand side represents the
Fourier transform of the Fourier basis. The top row corresponds to
the first generator, while the bottom row represents the $\ell$th
generator.}\label{fig:ifa}
\end{figure*}

To construct the second orthonormal basis, we choose $\psi_1(t)$
to be sparse in frequency, as in the finite case. The remaining
generators are obtained by shifts in frequency. To ensure that the
generators are real we must have that
$\Psi_{\ell}(\omega)=\overline{\Psi}_{\ell}(-\omega)$. Therefore,
we consider only the interval $[0,\pi N/T]$. Since we have $N$
real generators, we divide this interval into equal sections of
length $\pi/T$, and choose each $\Psi_{\ell}(\omega)$ to be
constant over the corresponding interval, as illustrated in
Fig.~\ref{fig:ifa}. More specifically, let
\begin{equation}
\I_{\ell}=\{\omega:|\omega| \in (\pi (\ell-1)/T,\pi \ell/T]\},
\end{equation}
be the $\ell$th interval. Then
\begin{equation}
\label{eq:phie} \Psi_{\ell}(\omega)=\left\{
\begin{array}{ll}
\sqrt{T}, & \omega \in \I_{\ell}; \\
0, & \mbox{otherwise}.
\end{array}
\right.
\end{equation}

The analog pair of bases generated by
$\{\Phi_{\ell}(\omega),\Psi_{\ell}(\omega),1 \leq \ell \leq N\}$
is referred to as the analog spike-Fourier pair. In order to
complete the analogy with the discrete spike-Fourier bases we need
to show that both analog sets are orthonormal and generate $\A$,
and that their coherence is equal to $1/\sqrt{N}$. The latter
follows immediately by noting that
\begin{equation}
\overline{\Phi}_{\ell}(\omega)\Psi_{r}(\omega)=\left\{
\begin{array}{ll}
\frac{T}{\sqrt{N}} e^{j \omega (\ell-1) T/N}, & \omega \in \I_{r}; \\
0, & \mbox{otherwise}.
\end{array}
\right.
\end{equation}
It is easy to see that replicas of $\I_r$ at distance $2\pi/T$
will not overlap. Furthermore, these replicas tile the entire
frequency axis; therefore,
$|R_{\phi_{\ell}\psi_{r}}\ejo|=1/\sqrt{N}$, and
$\mu(\Phi,\Psi)=1/\sqrt{N}$.

To show that $\{\psi_{\ell}(t),1 \leq \ell \leq N\}$ generate
$\A$, note that any $x(t) \in \A$ can be expressed in the form
(\ref{eq:si}) (or (\ref{eq:xeq}))  by choosing $A_{\ell}\ejt
=X(\omega)$ for $\omega \in \I_{\ell}$. If $X(\omega)$ is zero on
one of the intervals $\I_{\ell}$, then $A_{\ell}\ejo$ will also be
zero, leading to the multiband structure studied in
\cite{ME07,ME09}. Since the intervals on which
$\Psi_{\ell}(\omega)$ are non-zero do not overlap, the basis is
also orthogonal. Finally, orthonormality follows from our choice
of scaling.

Proving that $\{\phi_{\ell}(t),1 \leq \ell \leq N\}$ generate an
orthonormal basis is a bit more tricky. To see that these
functions span $\A$ note that from Shannon's sampling theorem, any
function $x(t)$ bandlimited to $\pi /T'$ with $T'=T/N$ can be
written as
\begin{equation}
\label{eq:xshannon} x(t)=\sum_{n \in \ZZ} x(nT')\sinc((t-nT')/T').
\end{equation}
Substituting $n=mN+\ell-1$, we can replace the sum over $n$ by the
double sum over $m \in \ZZ$ and $1 \leq \ell \leq N$, resulting in
\begin{eqnarray}
\label{eq:xshannon2} x(t) & = & \sum_{\ell=1}^N \sum_{m \in \ZZ}
a_{\ell}[m]\sinc((t-(\ell-1) T' -mT))/T') \nonumber
\\
& = & \sqrt{\frac{T}{N}}\sum_{\ell=1}^N \sum_{n \in \ZZ}
a_{\ell}[n]\phi_{\ell}(t-nT),
\end{eqnarray}
with $a_{\ell}[n]=x((\ell-1)T' +nT)$, proving that
$\{\phi_{\ell}(t)\}$ generate $\A$. Orthonormality of the basis
follows from
\begin{equation}
R_{\phi_{\ell}\phi_{r}}\ejo=\frac{1}{N}e^{j\omega(\ell-r)/N}\sum_{k=0}^{N-1}
 e^{-j2\pi k(\ell-r)/N}=\delta_{r\ell},
\end{equation}
where we used the relation
\begin{equation}
\label{eq:drl} \sum_{k=0}^{N-1}
 e^{-j2\pi k (\ell-r)/N}=N\delta_{r\ell}.
\end{equation}

\subsection{Tightness of the Uncertainty Relation}

Given any signal $x(t)$ in $\A$, the uncertainty relation for the
analog spike-Fourier pair states that the number of non-zero
sequences in the spike and Fourier bases must satisfy
(\ref{eq:ucl}). We now show that when $\sqrt{N}$ is an integer,
these inequalities can be achieved with equality with an
appropriate choice of $x(t)$, so that the uncertainty principle is
tight. To determine such a signal $x(t)$, we again mimic the
construction in the discrete case.

As we discussed in Section~\ref{sec:prob}, when using the finite
Fourier-spike pair, we have equalities in (\ref{eq:ucl}) when $\bx
\in \RR^N$ is a spike train with $\sqrt{N}$ non-zero values,
equally spaced, as illustrated in the left-hand side of
Fig.~\ref{fig:uce}. This follows from the fact that the spike
train has the same form in both time and frequency. To construct a
signal in $\A$ satisfying the analog uncertainty relation, we
replace each Fourier-domain spike in the discrete setting by a
shifted LPF of width $2\pi/T$ in the analog Fourier domain. To
ensure that there are $\sqrt{N}$ non-zero intervals of length
$2\pi/T$ in $(-\pi N/T,\pi N/T]$, the frequency spacing between
the LPFs is set to $2 \pi \sqrt{N}/T$, as depicted in the
right-hand side of Fig.~\ref{fig:uce}. This signal can be
represented in frequency by $\sqrt{N}$ basis functions
$\Psi_{m}(\omega)$, with $m=2\sqrt{N}\ell,1 \leq \ell \leq \lfloor
\sqrt{N}/2 \rfloor$, and $m=2\sqrt{N}(\ell-1)+1,1 \leq \ell \leq
\lceil \sqrt{N}/2 \rceil$.
 It
therefore remains to be shown that $x(t)$ can also be expanded in
time using $\sqrt{N}$ signals $\phi_{m}(t)$.
\begin{figure*}
\centering
\includegraphics[scale=.8]{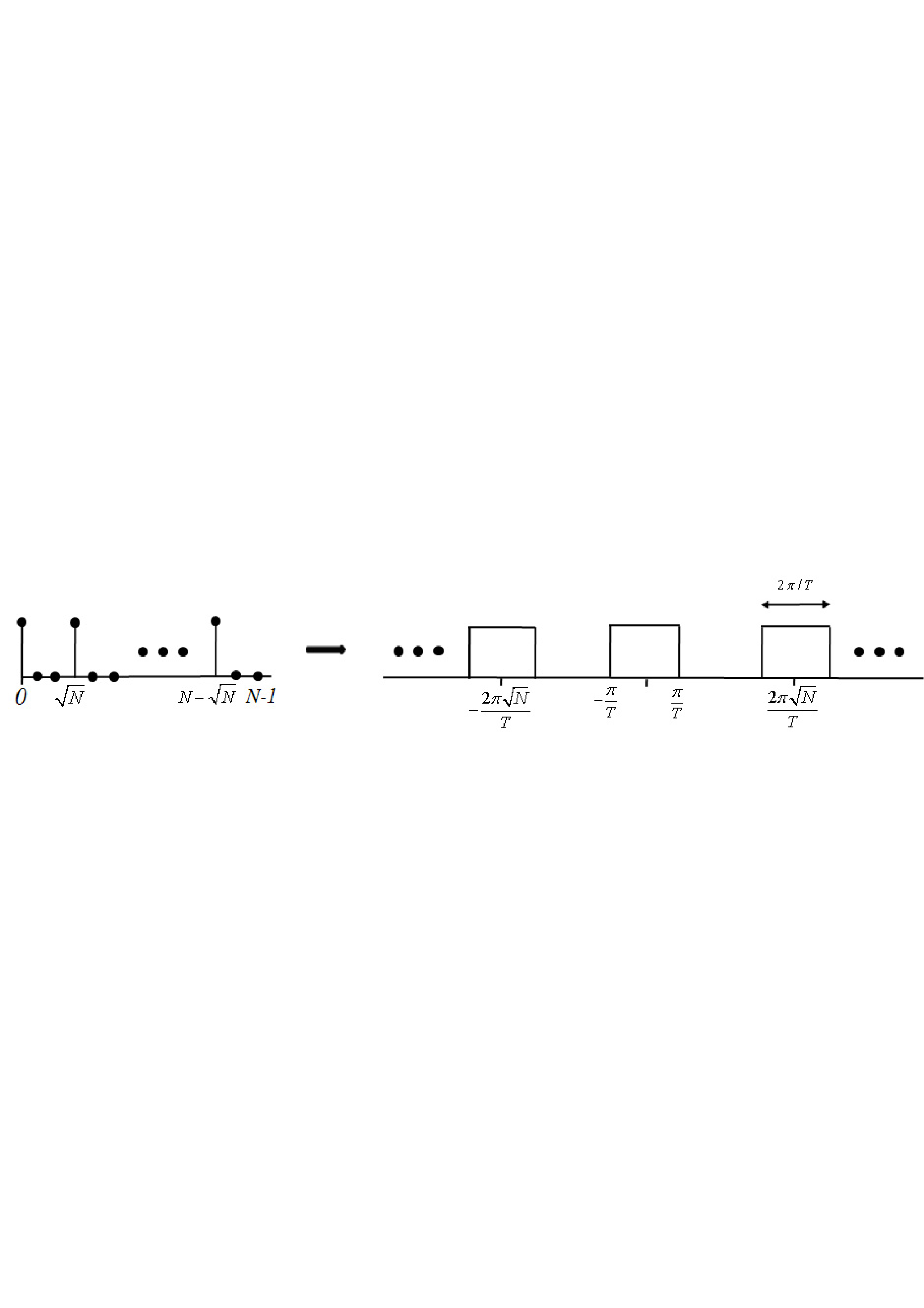}
\caption{Discrete and analog signals satisfying the uncertainty
principle with equality. The left-hand side is the discrete
Fourier transform of the spike train. The right-hand side
represents the analog Fourier transform of the LPF
train.}\label{fig:uce}
\end{figure*}

Since $x(t)$ is bandlimited to $\pi N/T$,
\begin{equation}
x(t)= \sum_{\ell=1}^N \sum_{n \in \ZZ} a_\ell[n]\phi_\ell(t-nT),
\end{equation}
where $a_\ell[n]=\inner{\phi_\ell(t-nT)}{x(t)}$. In the Fourier
domain we have
\begin{equation}
A_\ell\ejo=
 \frac{e^{j\omega(\ell-1)/N}}{\sqrt{NT}}\sum_{k \in \ZZ}
e^{-j\omega 2\pi k(\ell-1)/N} X\bl\frac{\omega}{T}-\frac{2\pi}{T}k
\br.
\end{equation}
Due to the fact that $a_\ell[n]$ is a real sequence,
$A_\ell\ejo=\overline{A}_\ell(e^{-j\omega})$. Therefore we
consider $A_\ell\ejo$ on the interval $[0, \pi]$. For values of
$\omega$ in this interval, $X(\omega/T-2\pi k/T)$ is non-zero only
for indices $ k=m\sqrt{N}$ with $\lfloor-\sqrt{N}/2+1\rfloor\leq m
\leq \lfloor \sqrt{N}/2\rfloor$. Thus,
\begin{eqnarray}
A_\ell\ejo & = &
 \frac{e^{j\omega(\ell-1)/N}}{\sqrt{NT}}\sum_{m=\lfloor-\sqrt{N}/2+1\rfloor}^{\lfloor \sqrt{N}/2\rfloor}
e^{-j\omega 2\pi m(\ell-1)/\sqrt{N}} \nonumber \\
& = &
\frac{e^{j\omega(\ell-1)/N}}{\sqrt{T}}\delta_{\ell-1,r\sqrt{N}},
\end{eqnarray}
where $r$ is an arbitrary integer. The last equality follows from
(\ref{eq:drl}) and the fact that the sum is over $\sqrt{N}$
consecutive values. Since $1 \leq \ell \leq N$, $A_\ell\ejo$ is
nonzero for $\sqrt{N}$ indices $\ell$, so that $x(t)$ can be
expanded in terms of $\sqrt{N}$ generators $\phi_{\ell}(t)$.

\section{Recovery of Sparse Representations}
\label{sec:rep}

\subsection{Discrete Representations}

One of the important implications of the discrete uncertainty
principle is its relation to sparse approximations
\cite{DH01,EB02,DE03,T04}. Given two orthonormal bases
$\bphi,\bpsi$ for $\RR^N$ an interesting question is whether one
can reduce the number of non-zero expansion coefficients required
to represent a vector $\bx \in \RR^N$ by decomposing it in terms
of the concatenated dictionary
\begin{equation}
\label{eq:df} \bbd=\left[ \begin{array}{cc} \bphi & \bpsi
\end{array}
\right].
\end{equation}
In many cases such a representation can be much sparser than the
decomposition in either of the bases alone. The difficulty is in
actually finding a sparse expansion $\bx=\bbd\bgam$ in which
$\bgam$ has as few non-zero components as possible. Since $\bbd$
has more columns than rows, the set of equations $\bx=\bbd\bgam$
is underdetermined and therefore $\bx$ can have multiple
representations $\bgam$. Finding the sparsest choice can be
translated into the combinatorial optimization problem
\begin{equation}
\label{eq:lo}
 \min_{\bgam}\|\bgam\|_0 \quad \sto  \bx=\bbd\bgam.
\end{equation}
Problem (\ref{eq:lo}) is NP-complete in general and cannot be
solved efficiently. A surprising result of \cite{DH01,EB02,FN03},
summarized below in Proposition~\ref{prop:unique}, is that if the
coherence $\mu(\bphi,\bpsi)$ between the two bases is small enough
with respect to the sparsity of $\bgam$, then the sparsest
possible $\bgam$ is unique and can be found by the basis pursuit
algorithm. This algorithm is a result of replacing the non-convex
$\ell_0$ norm by the convex $\ell_1$ norm:
\begin{equation}
\label{eq:l1d} \min_{\bgam}\|\bgam\|_1 \quad \sto  \bx=\bbd\bgam.
\end{equation}
\begin{proposition}
\label{prop:unique}  Let $\bbd=[\bphi\,\,\bpsi]$ be a dictionary
consisting of two orthonormal bases with coherence
$\mu(\bphi,\bpsi)=\max_{\ell,r}|\bphil^T_{\ell}\bpsil_r|$. If a
vector $\bx$ has a sparse decomposition in $\bbd$ such that
$\bx=\bbd\bgam$ and $\|\bgam\|_0<1/\mu(\bphi,\bpsi)$ then this
representation is unique, namely there cannot be another $\bgam'$
with $\|\bgam'\|_0<1/\mu(\bphi,\bpsi)$ and $\bx=\bbd\bgam'$.
Furthermore, if
\begin{equation}
\label{eq:gamc} \|\bgam\|_0 <
\frac{\sqrt{2}-0.5}{\mu(\bphi,\bpsi)},
\end{equation}
then the unique sparse representation can be found by solving the
$\ell_1$ optimization problem (\ref{eq:l1d}).
\end{proposition}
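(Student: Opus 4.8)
Write $\mu \deft \mu(\bphi,\bpsi)$ for brevity. The plan is to treat the two claims separately: the uniqueness statement follows essentially directly from the uncertainty relation (\ref{eq:ucd}), while the $\ell_1$ guarantee requires a null-space perturbation argument combined with the coherence bounds implied by orthonormality. \textbf{Uniqueness.} First I would show that every nonzero vector in the null space of $\bbd$ has at least $2/\mu$ nonzero entries. Write such a vector as $\bh$, with blocks $\bh_\phi$ and $\bh_\psi$ corresponding to the columns of $\bphi$ and $\bpsi$. Then $\bbd\bh=\bphi\bh_\phi+\bpsi\bh_\psi=\bo$, so $\bv \deft \bphi\bh_\phi=-\bpsi\bh_\psi$ is a single vector whose expansion coefficients in the two orthonormal bases are $\bh_\phi$ and $\bh_\psi$. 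If $\bh\neq\bo$ then $\bv\neq\bo$ (orthonormality makes $\bphi,\bpsi$ invertible), so (\ref{eq:ucd}) applied to $\bv$ gives $\|\bh_\phi\|_0+\|\bh_\psi\|_0\geq 2/\mu$, i.e. $\|\bh\|_0\geq 2/\mu$. Consequently, if $\bx=\bbd\bgam=\bbd\bgam'$ with $\|\bgam\|_0,\|\bgam'\|_0<1/\mu$, then $\bgam-\bgam'$ is a null-space vector with $\|\bgam-\bgam'\|_0\leq\|\bgam\|_0+\|\bgam'\|_0<2/\mu$, forcing $\bgam=\bgam'$.

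\textbf{$\ell_1$ recovery.} Let $\bgam$ be the sparse solution, $S$ its support, and let $\bgam+\bh$ be any other feasible point, so $\bh\neq\bo$ lies in the null space of $\bbd$. Splitting $\|\bgam+\bh\|_1$ over $S$ and $S^c$ and using $\gamma_i=0$ off $S$ gives
\begin{equation}
\|\bgam+\bh\|_1 \geq \|\bgam\|_1+\|\bh\|_1-2\sum_{i\in S}|h_i|,
\end{equation}
so it suffices to establish $\sum_{i\in S}|h_i|<\tfrac12\|\bh\|_1$ for every nonzero $\bh$ in the null space. To bound the concentration of $\bh$ on $S$, I would use $\bphi\bh_\phi=-\bpsi\bh_\psi$ together with orthonormality to obtain the coherence inequalities
\begin{equation}
\|\bh_\phi\|_\infty \leq \mu\|\bh_\psi\|_1, \qquad \|\bh_\psi\|_\infty \leq \mu\|\bh_\phi\|_1 .
\end{equation}
Writing $A$ and $B$ for the number of support indices of $\bgam$ lying in the $\bphi$ and $\bpsi$ blocks (so $A+B=\|\bgam\|_0$), the on-support mass is controlled by $\sum_{i\in S}|h_i|\leq A\|\bh_\phi\|_\infty+B\|\bh_\psi\|_\infty$, which reduces the desired estimate to a scalar inequality in $A,B,\mu$ and the two block $\ell_1$ masses $\|\bh_\phi\|_1,\|\bh_\psi\|_1$.

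\textbf{Main obstacle.} The crux is extracting the sharp threshold $(\sqrt2-0.5)/\mu$ rather than a weaker bound of the form $\tfrac12(1+1/\mu)$ that a single application of the coherence estimate yields for a generic dictionary. The improvement exploits the two-orthobasis structure: substituting each of the two coherence inequalities into the other couples the masses $\|\bh_\phi\|_1$ and $\|\bh_\psi\|_1$, and the resulting bound on the ratio $\sum_{i\in S}|h_i|/\|\bh\|_1$ must then be maximized over all admissible support splits $A+B=\|\bgam\|_0$. The worst case leads to a quadratic condition in $\mu\,\|\bgam\|_0$ whose solution is exactly $\sqrt2-0.5$; confirming that this split is the extremal one, and that the inequality is strict away from it, is the delicate bookkeeping. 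This is precisely the refinement of the original Donoho--Huo bound \cite{DH01} carried out in \cite{EB02,FN03}, whose optimization I would follow.
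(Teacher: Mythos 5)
Your proposal is correct and matches the paper's own treatment, which offers no independent proof but simply remarks that the result ``follows from the generalized discrete uncertainty principle'' as detailed in \cite{DH01,EB02}: your uniqueness step is exactly that null-space/uncertainty argument (every nonzero null vector of $\bbd$ has at least $2/\mu$ nonzeros), and your $\ell_1$ step is the Elad--Bruckstein optimization you explicitly defer to. One minor correction to your sketch of the ``main obstacle'': the $\sqrt{2}-0.5$ threshold does not come from substituting the two coherence inequalities into each other, but from pairing each coherence bound $\sum_{i\in S_\phi}|h_i|\leq A\,\mu\,\|\bh_\psi\|_1$ with the trivial bound $\sum_{i\in S_\phi}|h_i|\leq\|\bh_\phi\|_1$ (and symmetrically for the $\bpsi$ block) under the normalization $\|\bh_\phi\|_1+\|\bh_\psi\|_1=1$, which yields the condition $2\mu^2AB+\mu\max(A,B)<1$ whose worst case over $A+B=\|\bgam\|_0$ gives the quadratic with root $\sqrt{2}-0.5$ --- but since you defer that bookkeeping to \cite{EB02,FN03}, the outline is sound.
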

As detailed in \cite{DH01,EB02}, the proof of
Proposition~\ref{prop:unique} follows from the generalized
discrete uncertainty principle.

Another useful result on dictionaries with low coherence is that
every set of $k \leq 2/\mu(\bphi,\bpsi)-1$ columns are linearly
independent \cite[Theorem 6]{DE03}. This result can be stated in
terms of the Kruskal-rank of $\bbd$ \cite{Kruskal}, which is the
maximal number $q$ such that every set of $q$ columns of $\bbd$ is
linearly independent.
\begin{proposition}\cite[Theorem 6]{DE03}
\label{prop:kr} Let $\bbd=[\bphi\,\,\bpsi]$ be a dictionary
consisting of two orthonormal bases with coherence
$\mu(\bphi,\bpsi)$. Then $\sigma(\bbd) \geq 2/\mu(\bphi,\bpsi)-1$
where $\sigma(\bbd)$ is the Kruskal rank of $\bbd$.
\end{proposition}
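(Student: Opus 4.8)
The plan is to prove the Kruskal-rank lower bound $\sigma(\bbd) \geq 2/\mu(\bphi,\bpsi) - 1$ by a contradiction argument built on the Gram matrix of a putative set of linearly dependent columns. Suppose, for contradiction, that there exists a collection of $k$ columns of $\bbd$ that are linearly dependent, with $k \leq 2/\mu(\bphi,\bpsi) - 1$. Let $\bbc$ denote the $N \times k$ submatrix of $\bbd$ formed by these columns. Linear dependence means the Gram matrix $\bbc^T \bbc$ is singular, so it is enough to show that $\bbc^T \bbc$ is in fact nonsingular whenever $k$ is this small, yielding the contradiction.

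**First I would** record the structure of the Gram matrix $\bbc^T\bbc$. Because $\bphi$ and $\bpsi$ are each orthonormal, any two chosen columns coming from the same basis are orthonormal, so the diagonal entries of $\bbc^T\bbc$ are all equal to $1$, and off-diagonal entries between columns of the same basis vanish. The only nonzero off-diagonal entries arise from cross terms $\bphil_\ell^T \bpsil_r$, and each of these is bounded in absolute value by $\mu(\bphi,\bpsi)$ by the definition (\ref{eq:mud}). Writing $\bbc^T\bbc = \bbi + \bbw$, where $\bbw$ is the matrix of off-diagonal entries, I would bound the magnitude of the row sums of $\bbw$: each row of $\bbc^T\bbc$ has at most $k-1$ off-diagonal entries, each at most $\mu(\bphi,\bpsi)$ in absolute value, so every absolute row sum of $\bbw$ is at most $(k-1)\mu(\bphi,\bpsi)$.

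**The key step** is then to invoke the Gershgorin disc theorem (or, equivalently, the fact that a strictly diagonally dominant matrix is nonsingular). Every eigenvalue of $\bbc^T\bbc$ lies in a disc centered at $1$ of radius at most $(k-1)\mu(\bphi,\bpsi)$. Hence if $(k-1)\mu(\bphi,\bpsi) < 1$, no eigenvalue can equal zero, so $\bbc^T\bbc$ is nonsingular and the $k$ columns are linearly independent. The condition $(k-1)\mu(\bphi,\bpsi) < 1$ rearranges to $k < 1/\mu(\bphi,\bpsi) + 1$; since $k$ is an integer, this is guaranteed whenever $k \leq 2/\mu(\bphi,\bpsi) - 1$, because $2/\mu(\bphi,\bpsi) - 1 < 1/\mu(\bphi,\bpsi) + 1$ is equivalent to $1/\mu(\bphi,\bpsi) < 2$, which holds since $\mu(\bphi,\bpsi) \geq 1/\sqrt{N} > 1/2$ for $N \geq 5$. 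Here one has to be slightly careful with the exact integer thresholds, and the cleanest route is to argue directly that any set of size up to $2/\mu(\bphi,\bpsi)-1$ forces $(k-1)\mu(\bphi,\bpsi) < 1$, contradicting singularity.

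**The main obstacle** I anticipate is getting the constant and the strict-versus-nonstrict inequalities exactly right so that the stated bound $2/\mu(\bphi,\bpsi)-1$ emerges cleanly rather than a slightly weaker constant; this is the delicate bookkeeping of matching the Gershgorin radius condition to the claimed Kruskal rank. Since this is precisely Theorem~6 of \cite{DE03}, I would note that the result is imported directly from that reference, and the proof sketch above reproduces its essential idea. Consequently, the proof in the paper can simply cite \cite{DE03} and remark that it follows from the diagonal dominance of the Gram matrix of any small set of columns.
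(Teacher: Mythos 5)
The paper itself imports this proposition from \cite{DE03} without proof, so your argument must stand on its own, and it has a genuine gap at precisely the step you flagged as ``delicate bookkeeping.'' Bounding every Gershgorin radius of the Gram matrix by $(k-1)\mu(\bphi,\bpsi)$ certifies nonsingularity only when $(k-1)\mu(\bphi,\bpsi)<1$, i.e.\ $k<1/\mu(\bphi,\bpsi)+1$; that is the generic-dictionary bound of Proposition~\ref{prop:l1da}, roughly half of what is claimed here. Your attempted bridge---that $k\le 2/\mu(\bphi,\bpsi)-1$ forces $(k-1)\mu(\bphi,\bpsi)<1$---requires $\mu(\bphi,\bpsi)>1/2$, and the inequality you invoke is reversed: $1/\sqrt{N}>1/2$ holds for $N\le 3$, not $N\ge 5$. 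Concretely, for the spike--Fourier pair with $N=16$ one has $\mu(\bphi,\bpsi)=1/4$; the proposition asserts $\sigma(\bbd)\ge 7$, while your Gershgorin condition only certifies independence of sets of size $k\le 4$. The argument therefore fails exactly in the low-coherence regime where the statement has content.

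The missing factor of two comes from the bipartite structure of the Gram matrix, which a single row-sum bound discards. If the $k$ columns split into $k_1$ from $\bphi$ and $k_2$ from $\bpsi$ (columns from a single basis are automatically independent), the Gram matrix is
\begin{equation*}
\left[\begin{array}{cc}\bbi & \bbm\\ \bbm^H & \bbi\end{array}\right],
\end{equation*}
where $\bbm$ is a $k_1\times k_2$ submatrix of the unitary matrix $\bphi^T\bpsi$; this is singular only if the spectral norm of $\bbm$ equals $1$, whereas the squared spectral norm is at most the product of the largest absolute column sum and the largest absolute row sum, hence at most $k_1k_2\,\mu^2(\bphi,\bpsi)\le (k/2)^2\mu^2(\bphi,\bpsi)$ by the arithmetic--geometric mean inequality. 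Linear dependence therefore forces $k\ge 2/\mu(\bphi,\bpsi)$, which gives $\sigma(\bbd)\ge 2/\mu(\bphi,\bpsi)-1$. Equivalently, and more in the spirit of this paper, a nonzero null vector of the chosen columns yields a nonzero $\bx=\bphi\ba=\bpsi\bb$ with at most $k_1$ and $k_2$ nonzero coefficients in the two bases, and the uncertainty principle (\ref{eq:ucd}) gives $k\ge k_1+k_2\ge 2\sqrt{k_1k_2}\ge 2/\mu(\bphi,\bpsi)$. Either route delivers the stated constant; your route, as written, does not.
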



\subsection{Analog Representations}
\label{sec:ar}

We would now like to generalize these recovery results to the
analog setup. However, it is not immediately clear how to extend
the finite $\ell_1$ basis pursuit algorithm of (\ref{eq:l1d}) to
the analog domain.

To set up the analog sparse decomposition problem, suppose we have
a signal $x(t)$ that lies in a space $\A$, and let
$\{\phi_{\ell}(t),1 \leq \ell \leq N\},\{\psi_{\ell}(t),1 \leq
\ell \leq N\}$ be two orthonormal generators of $\A$. Our goal is
to represent $x(t)$ in terms of the joint dictionary
$\{d_{\ell}(t-nT),1 \leq \ell \leq 2N\}$ with
\begin{equation}
\label{eq:dl} d_{\ell}(t)=\left\{
\begin{array}{ll}
\phi_{\ell}(t), & 1 \leq \ell \leq N; \\
\psi_{\ell-N}(t), &  N+1 \leq \ell \leq 2N,
\end{array}
\right.
\end{equation}
using as few non-zero sequences as possible. Denoting by
$\bgam[n]$ the vector at point-$n$ whose elements are
$\gamma_{\ell}[n]$, our problem is to choose the vector sequence
$\bgam[n]$ such that
\begin{equation}
\label{eq:decoma} x(t)=\sum_{\ell=1}^{2N} \sum_{n \in \ZZ}
\gamma_\ell[n]d_\ell(t-nT),
\end{equation}
and $\gamma_\ell[n]$ is identically zero for the largest possible
number of indices $\ell$.

We can count the number of non-zero sequences by first computing
the $\ell_2$-norm of each sequence. Clearly, $\gamma_\ell[n]$ is
equal $0$ for all $n$ if and only if its $\ell_2$ norm
$\|\gamma_{\ell}[n]\|_2=(\sum_n |\gamma_{\ell}^2[n]|)^{1/2}$ is
zero. Therefore, the number of non-zero sequences
$\gamma_{\ell}[n]$ is equal to $\|\bc\|_0$ where
$c_\ell=\|\gamma_{\ell}[n]\|_2$. For ease of notation, we denote
$\|\bgam\|_{2,0}=\|\bc\|_0$, and similarly
$\|\bgam\|_{2,1}=\|\bc\|_1$. Finding the sparsest decomposition
(\ref{eq:decoma}) can then be written as
\begin{equation}
\label{eq:loa}
 \min_{\bgam}\|\bgam\|_{2,0} \quad \sto  x(t)=\sum_{\ell=1}^{2N} \sum_{n \in \ZZ}
\gamma_\ell[n]d_\ell(t-nT).
\end{equation}
Problem (\ref{eq:loa}) is the analog version of (\ref{eq:lo}).
However, in addition to being combinatorial as its finite
counterpart, (\ref{eq:loa}) also has infinitely many variables and
constraints.

In order to extend the finite-dimensional decomposition results to
the analog domain, there are two main questions we need to
address:
\begin{enumerate}
\item  Is there a unique sparse representation for any input
signal in a given dictionary? \item How can we compute a sparse
expansion in practice, namely solve (\ref{eq:loa}), despite the
combinatorial complexity and infinite dimensions?
\end{enumerate}
The first problem is easy to answer. Indeed, the uniqueness
condition of Proposition~\ref{prop:unique} can be readily extended
to the analog case. This is due to the fact that its proof is
based on the uncertainty relation (\ref{eq:ucd}) which is
identical to (\ref{eq:uca}), with the appropriate modification to
the coherence measure.
\begin{proposition}
\label{prop:uniquea} Suppose that  a signal $x(t) \in \A$ has a
sparse representation in the joint dictionary $\{d_{\ell}(t-nT),n
\in \ZZ,1 \leq \ell \leq 2N\}$ of (\ref{eq:dl}) which consists of
two orthonormal bases $\{\phi_{\ell}(t-nT),\psi_{\ell}(t-nT),n \in
\ZZ,1 \leq \ell \leq N\}$. If the coefficient sequences
$\gamma_\ell[n]$ of (\ref{eq:decoma}) satisfy
$\|\bgam\|_{2,0}<1/\mu(\Phi,\Psi)$ where $\mu(\Phi,\Psi)$ is the
coherence defined by (\ref{eq:mu}), then this representation is
unique.
\end{proposition}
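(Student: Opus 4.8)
The plan is to mirror the finite-dimensional uniqueness argument, which rests entirely on the uncertainty relation of Theorem~\ref{thm:uncertainty}. Suppose for contradiction that $x(t)$ admits two distinct sparse representations in the joint dictionary, with coefficient vector sequences $\bgam[n]$ and $\bgam'[n]$, both satisfying the sparsity bound $\|\bgam\|_{2,0}<1/\mu(\Phi,\Psi)$ and $\|\bgam'\|_{2,0}<1/\mu(\Phi,\Psi)$. The difference $\bd[n]=\bgam[n]-\bgam'[n]$ is a nonzero sequence (since the representations differ) that produces the zero signal when fed through the dictionary. The strategy is to show that such a nonzero null-element cannot be too sparse, and then derive a contradiction with the assumed sparsity bounds.

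The key step is to decompose the difference according to the two bases. Writing $\bd[n]=(\bd_\phi[n],\bd_\psi[n])$ where the first block of $N$ coordinates carries the $\phi$-coefficients and the second block the $\psi$-coefficients, the condition that $\bd[n]$ maps to the zero signal reads
\begin{equation}
\sum_{\ell=1}^N \sum_{n\in\ZZ} d_{\phi,\ell}[n]\phi_\ell(t-nT) = -\sum_{\ell=1}^N \sum_{n\in\ZZ} d_{\psi,\ell}[n]\psi_\ell(t-nT).
\end{equation}
Call this common signal $y(t)\in\A$. If $y(t)$ were identically zero, then by the Riesz-basis (orthonormality) property each block $\bd_\phi$ and $\bd_\psi$ would vanish, forcing $\bd=\bo$, a contradiction; so $y(t)\neq 0$. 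Now $y(t)$ has an expansion in the $\phi$-basis using at most $A \deft \|\bd_\phi\|_{2,0}$ active generators and in the $\psi$-basis using at most $B \deft \|\bd_\psi\|_{2,0}$ active generators. Applying Theorem~\ref{thm:uncertainty} to $y(t)$ yields $\tfrac{1}{2}(A+B)\geq \sqrt{AB}\geq 1/\mu(\Phi,\Psi)$, and in particular $A+B\geq 2/\mu(\Phi,\Psi)$.

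The contradiction then follows from a counting argument: since the active generators of $\bd$ split into those coming from the $\phi$-block and those from the $\psi$-block, we have $\|\bd\|_{2,0}=A+B\geq 2/\mu(\Phi,\Psi)$. On the other hand, the number of active generators in the difference is at most the sum of the active generators in each representation, so $\|\bd\|_{2,0}\leq \|\bgam\|_{2,0}+\|\bgam'\|_{2,0} < 2/\mu(\Phi,\Psi)$, which contradicts the lower bound. Hence no two distinct representations can both satisfy the sparsity hypothesis, establishing uniqueness.

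I expect the main obstacle to be the bookkeeping in the final counting step, specifically justifying that the support of the difference sequence (in the $\|\cdot\|_{2,0}$ sense of counting active generators) is contained in the union of the supports of $\bgam$ and $\bgam'$, so that $\|\bd\|_{2,0}\leq\|\bgam\|_{2,0}+\|\bgam'\|_{2,0}$. This is the analog of subadditivity of the $\ell_0$ norm, but here the ``support'' counts \emph{generators} rather than individual coefficients, so one must verify that a generator is active in $\bd$ only if it is active in at least one of $\bgam,\bgam'$ — which holds because if both $\gamma_\ell[n]$ and $\gamma_\ell'[n]$ are identically zero in $n$ then so is their difference. The application of Theorem~\ref{thm:uncertainty} itself is immediate once $y(t)$ is identified, since it is precisely a signal in $\A$ expanded in the two orthonormal bases.
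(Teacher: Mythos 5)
Your proof is correct and is exactly the argument the paper has in mind: the paper does not write out a proof but states that the finite-dimensional uniqueness argument of Proposition~\ref{prop:unique} carries over because it rests only on the uncertainty relation, and your write-up is precisely that argument (subtract the two representations, apply Theorem~\ref{thm:uncertainty} to the resulting nonzero null signal $y(t)$, and contradict the subadditivity bound $\|\bd\|_{2,0}<2/\mu(\Phi,\Psi)$). The points you flag as potential obstacles --- generator-level subadditivity of $\|\cdot\|_{2,0}$ and the need for $y(t)\neq 0$ via the Riesz-basis property --- are handled correctly.
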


The second, more difficult question, is how to find a unique
sparse representation when it exists. We may attempt to develop a
solution by replacing the $\ell_0$ norm in (\ref{eq:loa})  by  an
$\ell_1$ norm, as in the finite-dimensional case. This leads to
the convex program
\begin{equation}
\label{eq:l1a} \min_{\bgam}\|\bgam\|_{2,1} \quad \sto
x(t)=\sum_{\ell=1}^{2N} \sum_{n \in \ZZ}
\gamma_\ell[n]d_\ell(t-nT).
\end{equation}
However, in practice, it is not clear how to solve (\ref{eq:l1a})
since it is defined over an infinite set of variables
$\gamma_{\ell}[n]$, and has infinitely many constraints (for all
$t$).

Our approach to treating the analog decomposition problem is to
first sample the signal $x(t)$ at a high enough rate, so that
$x(t)$ can be determined from the given samples. We will then show
that the decomposition problem can be recast in the Fourier domain
as that of recovering a set of sparse vectors that share a joint
sparsity pattern, from the given sequences of samples. The
importance of this reformulation is that under appropriate
conditions, it allows to determine the joint support set (or the
active generators) by solving a finite-dimensional optimization
problem. Once the active generators are determined, the
corresponding coefficient sequences can be readily found.

We begin by noting that since $\{\phi_{\ell}(t)\}$ generate an
orthonormal basis for $\A$, $x(t)$ is uniquely determined by the
$N$ sequences of samples
\begin{equation}
c_{\ell}[n]=\inner{\phi_{\ell}(t-nT)}{x(t)}=r_\ell(nT),
\end{equation}
where $r_\ell(t)$ is the convolution
$r_\ell(t)=\phi_\ell(-t)*x(t)$. Indeed, orthonormality of
$\{\phi_{\ell}(t)\}$ immediately implies that
\begin{equation}
\label{eq:xtc} x(t)=\sum_{\ell=1}^{N} \sum_{n \in \ZZ}
c_\ell[n]\phi_\ell(t-nT).
\end{equation}
Therefore, constraining $x(t)$ is equivalent to imposing
restrictions on the expansion coefficients $c_{\ell}[n]$. Taking
the inner products on both sides of (\ref{eq:decoma}) with respect
to $\phi_{r}(t-mT)$ leads to
\begin{eqnarray}
\label{eq:constc2} c_r[m] & = & \sum_{\ell=1}^{2N} \sum_{n \in
\ZZ}
\gamma_\ell[n]\inner{\phi_{r}(t-mT)}{d_\ell(t-nT)} \nonumber \\
& = & \sum_{\ell=1}^{2N} \sum_{n \in \ZZ}
\gamma_\ell[n]a_{r\ell}[m-n],
\end{eqnarray}
where $a_{r\ell}[n]=\inner{\phi_{r}(t-nT)}{d_\ell(t)}$. In the
Fourier domain, (\ref{eq:constc2}) can be written as
\begin{equation}
\label{eq:constcf} C_r\ejo=\sum_{\ell=1}^{2N} \Gamma_\ell\ejo
A_{r\ell}\ejo,\quad 1 \leq r \leq N.
\end{equation}
Thus, instead of finding $\gamma_{\ell}[n]$ satisfying the
constraints in (\ref{eq:loa}) we can alternatively seek the
smallest number of functions $\Gamma_{\ell}\ejo$ that satisfy
(\ref{eq:constcf}).

To simplify (\ref{eq:constcf}) we use the definition (\ref{eq:dl})
of $d_{\ell}(t)$. Since \sloppy
$\inner{\phi_{r}(t-nT)}{\phi_\ell(t)}=\delta_{r\ell}\delta_{n0}$
and the Fourier transform of
$\inner{\phi_{r}(t-nT)}{\psi_\ell(t)}$ is equal to
$R_{\phi_r\psi_\ell}\ejo$, (\ref{eq:constcf}) can be written as
\begin{equation}
\label{eq:constcf2} C_r\ejo=\Gamma_r\ejo +\sum_{\ell=N+1}^{2N}
\Gamma_\ell\ejo R_{\phi_r\psi_\ell}\ejo.
\end{equation}
Denoting by $\bc\ejo,\bgam\ejo$ the vectors with elements
$C_\ell\ejo,\Gamma_\ell\ejo$ respectively, we can express
(\ref{eq:constcf2}) as
\begin{equation}
\label{eq:constf} \bc\ejo= \left[
\begin{array}{ll}
\bbi & \bbm_{\phi\psi}\ejo
\end{array}
\right] \bgam\ejo,
\end{equation}
where $\bbm_{\phi\psi}(e^{j\omega})$ is the sampled cross
correlation matrix
\begin{equation}
\label{eq:M2}
\bbm_{\phi\psi}(e^{j\omega})=\left[\begin{array}{ccc}
R_{\phi_1\psi_1}\ejo & \ldots &
R_{\phi_1\psi_N}\ejo\\
\vdots & \vdots &  \vdots \\
R_{\phi_N\psi_1}\ejo & \ldots & R_{\phi_N\psi_N}\ejo
\end{array}
\right],
\end{equation}
with $R_{\phi\psi}$ defined by (\ref{eq:R}). Our sparse recovery
problem (\ref{eq:loa}) is therefore equivalent to
\begin{equation}
\label{eq:l1f}
\begin{array}{ll}
\min_{\bgam}& \|\bgam\ejo\|_{2,0} \\
\sto & \bc\ejo=\left[
\begin{array}{ll}
\bbi & \bbm_{\phi\psi}\ejo
\end{array}
\right] \bgam\ejo.
\end{array}
\end{equation}

Problem (\ref{eq:l1f}) resembles the multiple measurement vector
(MMV) problem, in which the goal is to jointly decompose $m$
vectors $\bx_i,1 \leq i \leq m$ in a dictionary $\bbd$
\cite{Chen,Cotter,EM082,ER09}. In the next section we review the
MMV model and a recently developed generalization to the case in
which it is desirable to jointly decompose infinitely many vectors
$\bx_i$ in terms of a given dictionary $\bbd$. This extension is
referred to as the infinite measurement model (IMV) \cite{ME08}.
In Section~\ref{sec:auc} we show how these ideas can be used to
solve (\ref{eq:l1f}).

As we will show, the ability to sparsely decompose a set of
signals in the IMV and MMV settings depends on the properties of
the corresponding dictionary. In our formulation (\ref{eq:l1f}),
the dictionary is given by
\begin{equation}
\label{eq:dom} \bbd\ejo=\left[
\begin{array}{ll}
\bbi & \bbm_{\phi\psi}\ejo
\end{array}
\right].
\end{equation}
The next proposition establishes some properties of $\bbd\ejo$
that will be used in Section~\ref{sec:auc} in order to solve
(\ref{eq:l1f}).
\begin{proposition}
\label{prop:formm} Let $\{\phi_{\ell}(t-nT),\psi_{\ell}(t-nT),n
\in \ZZ,1 \leq \ell \leq N\}$ denote two orthonormal bases for a
SI space $\A$. Let $\bbm_{\phi\psi}\ejo$ denote
 the cross-correlation matrix defined by (\ref{eq:M2}), and let
 $\mu(\Phi,\Psi)$,$\mu(\bphi,\bpsi)$  be the analog and discrete coherence measures defined by
 (\ref{eq:mu}), (\ref{eq:mud}).
 Then, for each $\omega$:
 \begin{enumerate}
 \item $\bbm_{\phi\psi}\ejo$ is a unitary matrix;
 \item $\mu(\bbi,\bbm_{\phi\psi}\ejo) \leq \mu(\Phi,\Psi)$.
 \end{enumerate}
 \end{proposition}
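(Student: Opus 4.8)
The plan is to prove the two statements separately, with essentially all of the work going into the unitarity claim; the coherence bound will then fall out immediately from the definitions. For the first part, the key conceptual point is that $\bbm_{\phi\psi}\ejo$ is nothing other than the Fourier-domain change-of-basis matrix between the two orthonormal generator sets, and change-of-basis matrices between orthonormal bases are unitary. To make this precise, I would start exactly as in the proof of Theorem~\ref{thm:mu}: since $\psi_\ell(t)\in\A$ and $\{\phi_r(t)\}$ generate $\A$, I expand each $\psi_\ell$ in the $\phi$ basis, analogously to (\ref{eq:mup}),
\[
\Psi_\ell(\omega)=\sum_{r=1}^N C_r^\ell\ejt \Phi_r(\omega),
\]
for some coefficient sequences with DTFTs $C_r^\ell\ejt$. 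Substituting this into the definition (\ref{eq:R}) of $R_{\phi_r\psi_\ell}\ejo$ and using that $C_s^\ell(e^{j(\omega-2\pi k)})=C_s^\ell\ejo$ is independent of the summation index $k$, the coefficient factor detaches from the sum over $k$, leaving $\sum_s C_s^\ell\ejo R_{\phi_r\phi_s}\ejo$, which collapses to $C_r^\ell\ejo$ by orthonormality (\ref{eq:orth}). This is the same computation already carried out in (\ref{eq:mub}), and it identifies the $(r,\ell)$ entry of $\bbm_{\phi\psi}\ejo$ with $C_r^\ell\ejo$; in other words $\bbm_{\phi\psi}\ejo=\bbu\ejo$, where $\bbu\ejo$ is the matrix whose $\ell$th column holds the expansion coefficients of $\psi_\ell$.

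It then remains to check that $\bbu\ejo$ is unitary. For this I would insert the same expansion into the orthonormality relation $R_{\psi_\ell\psi_m}\ejo=\delta_{\ell m}$ of (\ref{eq:orth}); after expanding both factors and again using $R_{\phi_s\phi_{s'}}\ejo=\delta_{ss'}$, the left-hand side reduces to $\sum_s \overline{C_s^\ell\ejo}\,C_s^m\ejo=[\bbu^H\bbu]_{\ell m}$. Hence $\bbu^H\bbu=\bbi$ for (almost) every $\omega$, so the square matrix $\bbu\ejo=\bbm_{\phi\psi}\ejo$ is unitary, proving part~1. The only real obstacle here is bookkeeping: correctly carrying out the periodicity argument so that the $C$-factors come out of the $k$-sum, and keeping track of conjugates. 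Everything else is a direct consequence of the two orthonormality conditions.

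For the second part the argument is immediate once part~1 is in place, and in fact it does not even require unitarity. Computing the discrete coherence against the identity basis simply selects the largest-magnitude entry of the other matrix: since the columns of $\bbi$ are the standard unit vectors ${\bf e}_\ell$, the definition (\ref{eq:mud}) gives $\mu(\bbi,\bbm_{\phi\psi}\ejo)=\max_{\ell,r}|{\bf e}_\ell^H[\bbm_{\phi\psi}\ejo]_r|=\max_{\ell,r}|R_{\phi_\ell\psi_r}\ejo|$ at the given $\omega$, using the entries of (\ref{eq:M2}). Comparing this with the analog coherence (\ref{eq:mu}), which takes an additional essential supremum over $\omega$, yields
\[
\mu(\bbi,\bbm_{\phi\psi}\ejo)=\max_{\ell,r}\left|R_{\phi_\ell\psi_r}\ejo\right|
\leq \max_{\ell,r}\ess\sup_{\omega}\left|R_{\phi_\ell\psi_r}\ejo\right|=\mu(\Phi,\Psi)
\]
for almost every $\omega$, which is the claimed bound. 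Thus the substantive content of the proposition lies entirely in the unitarity identity, while the coherence inequality is a one-line consequence of comparing a pointwise maximum with its essential supremum.
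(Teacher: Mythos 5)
Your argument is correct and is essentially the paper's own proof in mirror image: the paper expands each $\phi_\ell$ in the $\psi$-basis, identifies $A_r^{\ell}\ejo=\overline{R}_{\phi_\ell\psi_r}\ejo$, and uses orthonormality of $\{\phi_\ell\}$ to show the \emph{rows} of $\bbm_{\phi\psi}\ejo$ are orthonormal, whereas you expand each $\psi_\ell$ in the $\phi$-basis and show the \emph{columns} are orthonormal -- equivalent for a square matrix. The coherence bound is handled the same way in both (largest entry at a fixed $\omega$ versus its essential supremum), the only cosmetic difference being that the paper first invokes unitarity to declare $\mu(\bbi,\bbm_{\phi\psi}\ejo)$ well defined as a coherence between orthonormal bases.
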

\begin{proof}
See Appendix~\ref{app:formm}.
\end{proof}

\subsection{MMV and IMV Models}
\label{sec:imv}

The basic results of \cite{EB02,GN03,DE03} on expansions in
dictionaries consisting of two orthonormal bases can be
generalized to the MMV problem in which we would like to jointly
decompose $m$ vectors $\bx_i,1 \leq i \leq m$ in a dictionary
$\bbd$. Denoting by $\bbx$ the matrix with columns $\bx_i$, our
goal is to seek a matrix $\bgaml$ with columns $\bgam_i$ such that
$\bbx=\bbd\bgaml$ and $\bgaml$ has as few non-zero rows as
possible. In this model, not only is each representation vector
$\bgam_i$ sparse, but in addition the vectors share a joint
sparsity pattern. The results in \cite{Chen,Cotter,EM082}
establish that under the same conditions as
Proposition~\ref{prop:unique}, the unique $\bgaml$ can be found by
solving an extension of the $\ell_1$ program:
\begin{equation}
\label{eq:l1mmv} \min_{\bgaml}\|\bs(\bgaml)\|_1 \quad \sto
\bbx=\bbd\bgaml.
\end{equation}
Here $\bs(\bgaml)$ is a vector whose $\ell$th element is equal to
$\|\bgaml^{\ell}\|$ where $\bgaml^{\ell}$ is the $\ell$th row of
$\bgaml$, and the norm is an arbitrary vector norm. When $\bgaml$
is equal to a single vector $\bgam$,
$\|\bgaml^{\ell}\|=|\gamma_\ell|$ for any choice of norm and
(\ref{eq:l1mmv}) reduces to the standard $\ell_1$ optimization
problem (\ref{eq:l1d}).
\begin{proposition}
\label{prop:mmv} Let $\bbx$ be an $N \times m$ matrix with columns
$\bx_i,1 \leq i \leq m$ that have a joint sparse representation in
the dictionary $\bbd=[\bphi\,\,\bpsi]$ consisting of two
orthonormal bases, so that $\bbx=\bbd\bgaml$ with
$\|\bs(\bgaml)\|_0=k$. If $k<1/\mu(\bphi,\bpsi)$ where
$\mu(\bphi,\bpsi)=\max_{\ell,r}|\bphil^H_{\ell}\bpsil_r|$, then
this representation is unique. Furthermore, if
\begin{equation}
k < \frac{\sqrt{2}-0.5}{\mu(\bphi,\bpsi)},
\end{equation}
then the unique sparse representation can be found by solving
(\ref{eq:l1mmv}) with any vector norm.
\end{proposition}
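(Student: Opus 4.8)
The plan is to reduce both assertions to their single-vector counterparts, exploiting the fact that each column of the representation matrix $\bgaml$ is itself an ordinary sparse expansion in $\bbd=[\bphi\,\,\bpsi]$, and that all of these columns share the common row-support $S$ with $|S|=k$. The uniqueness statement will fall out column-by-column from Proposition~\ref{prop:unique}, while the $\ell_1$ recovery statement will be recast as a null-space property for the row-norm functional.

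For uniqueness I would argue as follows. Suppose $\bbx=\bbd\bgaml=\bbd\bgaml'$ with $\|\bs(\bgaml)\|_0=k<1/\mu(\bphi,\bpsi)$ and $\|\bs(\bgaml')\|_0<1/\mu(\bphi,\bpsi)$. Fixing a column index $i$, both $\bgam_i$ and $\bgam_i'$ solve $\bx_i=\bbd\bgam$, and since the number of non-zero entries in any single column is bounded by the number of non-zero rows, $\|\bgam_i\|_0\le\|\bs(\bgaml)\|_0<1/\mu(\bphi,\bpsi)$ and likewise for $\bgam_i'$. The uniqueness half of Proposition~\ref{prop:unique} then forces $\bgam_i=\bgam_i'$, and doing this for every $i$ gives $\bgaml=\bgaml'$. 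Equivalently, one may count rows: $\bbd(\bgaml-\bgaml')=0$ with fewer than $2/\mu(\bphi,\bpsi)$ non-zero rows would exhibit too few dependent columns of $\bbd$ to be compatible with the Kruskal-rank bound $\sigma(\bbd)\ge 2/\mu(\bphi,\bpsi)-1$ of Proposition~\ref{prop:kr}, unless the difference vanishes.

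For the recovery claim I would pass to a null-space characterization. Let $\bgaml$ be the true $k$-row-sparse solution with support $S$ and let $\bgaml+\bbh$ be any competitor, so that $\bbh\neq 0$ and every column of $\bbh$ lies in the null space of $\bbd$. Applying the triangle inequality to the row norms separately on $S$ and on $S^c$ reduces the desired strict inequality $\|\bs(\bgaml+\bbh)\|_1>\|\bs(\bgaml)\|_1$ to the matrix null-space property $\sum_{\ell\in S}\|\bbh^\ell\|<\sum_{\ell\in S^c}\|\bbh^\ell\|$. The hypothesis $k<(\sqrt{2}-0.5)/\mu(\bphi,\bpsi)$ is exactly the condition under which the single-vector guarantee of Proposition~\ref{prop:unique} holds, and that guarantee is equivalent to the scalar null-space property $\|\bv_S\|_1<\|\bv_{S^c}\|_1$ for every non-zero $\bv$ in the null space of $\bbd$. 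Summing this scalar property over the columns of $\bbh$ immediately yields the matrix null-space property, hence recovery, in the case of the $\ell_1$ row norm.

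The hard part is the phrase ``with any vector norm.'' Column-wise summation delivers the null-space property only for the $\ell_1$ row norm, whereas the statement asserts recovery for an arbitrary norm $\|\bbh^\ell\|$ on the rows, a choice that couples the columns and defeats the column-by-column reduction. I would resolve this in one of two ways. The quickest is to invoke the MMV/SMV equivalence established in \cite{Chen,Cotter,EM082}, which shows that the choice of row norm does not affect whether the jointly sparse solution is recovered, so the scalar coherence bound transfers verbatim. For a self-contained argument I would instead build a dual certificate: a matrix $\bbw$ with $\bbd^H\bbw$ equal to a subgradient of $\sum_{\ell}\|\cdot\|$ at $\bgaml$ on $S$ and of dual norm strictly below one off $S$. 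The certificate underlying Proposition~\ref{prop:unique} supplies the scalar template, and the bound $k<(\sqrt{2}-0.5)/\mu(\bphi,\bpsi)$ is precisely what keeps the off-support dual norms below one; verifying this last inequality \emph{uniformly over all admissible norms} is the crux and the main obstacle of the proof.
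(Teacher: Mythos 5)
The paper does not actually prove this proposition: it is stated as a known result imported from the MMV literature (\cite{Chen,Cotter,EM082}), so your first proposed resolution---invoking those references---is exactly what the author does, and your column-wise uniqueness argument (or its Kruskal-rank variant) is a correct and standard way to dispose of the first claim. The only substantive issue is the one you yourself flagged: summing the scalar null-space property over the columns of $\bbh$ certifies recovery only for the row-$\ell_1$ functional, and the ``any vector norm'' clause is not reachable that way.

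However, the fix is considerably more elementary than a dual certificate, and does not require verifying anything ``uniformly over all admissible norms.'' Work directly with the nonnegative vector of row norms $c_\ell=\|\bbh^\ell\|$, where $\bbd\bbh=\bo$ and $\bbh\neq\bo$. Splitting $\bbh$ into the block $\bba$ of rows indexed by the $\bphi$-part and the block $\bbb$ of rows indexed by the $\bpsi$-part, the null-space relation $\bphi\bba+\bpsi\bbb=\bo$ and orthonormality give $\bba=-\bphi^H\bpsi\,\bbb$, i.e.\ $\bba^\ell=-\sum_r(\bphil_\ell^H\bpsil_r)\,\bbb^r$ row by row. The triangle inequality and homogeneity of the (arbitrary) row norm then yield $c_\ell\leq\mu(\bphi,\bpsi)\sum_r c_{N+r}$ and, symmetrically, $c_{N+r}\leq\mu(\bphi,\bpsi)\sum_\ell c_\ell$. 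These inequalities, together with nonnegativity, are the \emph{only} properties of the moduli $|d_\ell|$ that the Elad--Bruckstein derivation of the constant $\sqrt{2}-0.5$ uses; the rest of that derivation is a finite optimization over the vector of moduli and carries over verbatim with $|d_\ell|$ replaced by $c_\ell$. Hence the null-space property $\sum_{\ell\in S}\|\bbh^\ell\|<\sum_{\ell\in S^c}\|\bbh^\ell\|$ holds for every row norm under $k<(\sqrt{2}-0.5)/\mu(\bphi,\bpsi)$, which closes the gap without any certificate construction. This is essentially how the cited references obtain the norm-independence of the MMV guarantee.
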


The MMV model has been recently generalized to the IMV case in
which there are infinitely many vectors $\bx$ of length $N$, and
infinitely many coefficient vectors $\bgam$:
\begin{equation}
\label{eq:imv} \bx(\lambda)=\bbd\bgam(\lambda),\quad \lambda \in
\Lambda,
\end{equation}
where $\Lambda$ is some set whose cardinality can be infinite. In
particular, $\Lambda$ may be uncountable, such as the set of
frequencies $\omega \in (-\pi,\pi]$. The $k$-sparse IMV model
assumes that the vectors $\{\bgam(\lambda)\}$, which we denote for
brevity by $\bgam(\Lambda)$, share a joint sparsity pattern, so
that the non-zero elements  are all supported on a fixed location
set of size $k$ \cite{ME08}. This model was first introduced in
\cite{ME07} in the context of blind sampling of multiband signals,
and later analyzed in more detail in \cite{ME08}.

A major difficulty with the IMV model is that it is not clear in
practice how to determine the entire solution set $\bgam(\Lambda)$
since there are infinitely many equations to solve. Thus, using an
$\ell_1$ optimization, or a greedy approach, are not immediately
relevant here. In \cite{ME08} it was shown that (\ref{eq:imv}) can
be converted to a finite MMV without loosing any information by a
set of operations that are grouped under a block refereed to as
the continuous-to-finite (CTF) block. The essential idea is to
first recover the support of $\bgam(\Lambda)$, namely the non-zero
location set, by solving a finite MMV. We then reconstruct
$\bgam(\Lambda)$ from the data $\bx(\Lambda)$ and the knowledge of
the support, which we denote by $S$. The reason for this
separation is that once $S$ is known, the linear relation of
(\ref{eq:imv}) becomes invertible when the coherence is low
enough.

To see this, let $\bbd_S$ denote the matrix containing the subset
of the columns of $\bbd$ whose indices belong to $S$. The system
of (\ref{eq:imv}) can then be written as
\begin{equation}\label{yAxS}
\bx(\lambda) = \bbd_S \bgam^S(\lambda),\quad\lambda\in\Lambda,
\end{equation}
where the superscript $\bgam^S(\lambda)$ is the vector that
consists of the entries of $\bgam(\lambda)$ in the locations $S$.
 Since $\bgam(\Lambda)$ is $k$-sparse, $|S|\leq k$. In
addition, from Proposition~\ref{prop:kr} it follows that if
$\mu(\bphi,\bpsi)<1/k$ then every $k$ columns of $\bbd$ are
linearly independent. Therefore $\bbd_S$ consists of linearly
independent columns implying that $\bbd_S^\dag\bbd_S=\bbi$, where
$\bbd_S^\dag = \left(\bbd_S^H\bbd_S\right)^{-1}\bbd_S^H$ is the
Moore-Penrose pseudo-inverse of $\bbd_S$. Multiplying (\ref{yAxS})
by $\bbd_S^\dag$ on the left gives
\begin{equation}\label{Reconstruct1}
\bgam^S(\lambda) = \bbd_S^\dag\bx(\lambda),\quad\lambda\in\Lambda.
\end{equation}
The elements in $\bgam(\lambda)$ not supported on $S$ are all
zero. Therefore (\ref{Reconstruct1}) allows for exact recovery of
$\bgam(\Lambda)$ once the finite set $S$ is correctly identified.

In order to determine $S$ by solving a finite-dimensional problem
we exploit the fact that $\Span(\bx(\Lambda))$ is finite, since
$\bx(\lambda)$ is of length $N$. Therefore, $\Span(\bx(\Lambda))$
has dimension at most $N$. In addition, it is shown in \cite{ME08}
that if there exists a solution set $\bgam(\Lambda)$ with sparsity
$k$, and the matrix $\bbd$ has Kruskal rank $\sigma(\bbd) \geq
2k$, then every finite collection of vectors spanning the subspace
$\Span(\bx(\Lambda))$ contains sufficient information to recover
$S$ exactly. Therefore, to find $S$ all we need is to construct a
matrix $\bbv$ whose range space is equal to $\Span(\bx(\Lambda))$.
We are then guaranteed that the linear system
\begin{equation}\label{VAU}
\bbv=\bbd\bbu
\end{equation}
has a unique $k$-sparse solution $\bbu$ whose row support is equal
$S$.
 This result allows to avoid the infinite
structure of (\ref{eq:imv}) and to concentrate on finding the
finite set $S$ by solving the single MMV system of (\ref{VAU}).
The solution can be determined using an $\ell_1$ relaxation of the
form (\ref{eq:l1mmv}) with $\bbv$ replacing $\bbx$, as long as the
conditions of Proposition~\ref{prop:mmv} hold, namely the
coherence is small enough with respect to the sparsity.

In practice, a matrix $\bbv$ with column span equal to
$\Span(\bx(\Lambda))$ can be constructed by first forming the
matrix $\bbq
=\int_{\lambda\in\Lambda}\bx(\lambda)\bx^H(\lambda)d\lambda$,
assuming that the integral exists. Every $\bbv$ satisfying
$\bbq=\bbv\bbv^H$ will then have a column span equal to
$\Span(\bx(\Lambda))$ \cite{ME08}. In particular, the columns of
$\bbv$ can be chosen as the eigenvectors of $\bbq$ multiplied by
the square-root of the corresponding eigenvalues.

We summarize the steps enabling a finite-dimensional solution to
the IMV problem in the following theorem.
\begin{theorem}\label{ThKey}
Consider the system of equations (\ref{eq:imv}) where
$\bbd=[\bphi\,\,\bpsi]$ is a dictionary consisting of two
orthonormal bases with coherence
$\mu(\bphi,\bpsi)=\max_{\ell,r}|\bphil^H_{\ell}\bpsil_r|$. Suppose
(\ref{eq:imv}) has a $k$-sparse solution set $\bgam(\Lambda)$ with
support set $S$. If the Kruskal rank $\sigma(\bbd) \geq 2k$, then
$\bgam(\Lambda)$ is unique. In addition, let $\bbv$ be a matrix
whose column-space is equal to $\Span(\bx(\Lambda))$. Then, the
linear system $\bbv=\bbd\bbu$ has a unique $k$-sparse solution
$\bbu$ whose row support is equal to $S$. Denoting by $\bbd_S$ the
columns of $\bbd$ whose indices belong to $S$, the non-zero
elements $\bgam^S(\lambda)$ are given by
$\bgam^S(\lambda)=\bbd_S^\dagger\bx(\lambda)$.
 Finally, if
\begin{equation}
k < \frac{\sqrt{2}-0.5}{\mu(\bphi,\bpsi)},
\end{equation}
then $\sigma(\bbd) \geq 2k$ and the unique sparse $\bbu$ can be
found by solving (\ref{eq:l1mmv}) with any vector norm.
\end{theorem}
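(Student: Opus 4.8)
The plan is to assemble the four assertions from pieces already in place: the Kruskal-rank bound (Proposition~\ref{prop:kr}), the finite MMV guarantee (Proposition~\ref{prop:mmv}), and the continuous-to-finite reduction of \cite{ME08} sketched just before the theorem. I would begin with uniqueness of $\bgam(\Lambda)$. If $\bgam'(\Lambda)$ were a second $k$-sparse solution with support $S'$, then for every $\lambda$ the difference $\bgam(\lambda)-\bgam'(\lambda)$ lies in the kernel of $\bbd$ and is supported on $S\cup S'$, a set of size at most $2k$. Because $\sigma(\bbd)\geq 2k$ forces every $2k$ columns of $\bbd$ to be linearly independent, no nonzero $2k$-sparse vector can lie in the kernel, so the difference vanishes for each $\lambda$ and $\bgam(\Lambda)=\bgam'(\Lambda)$.

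The heart of the argument is the reduction of the infinite system to the single MMV $\bbv=\bbd\bbu$. From $\bx(\lambda)=\bbd_S\bgam^S(\lambda)$ I get that $\Span(\bx(\Lambda))$ is contained in the column space of $\bbd_S$, and since $|S|\leq k\leq\sigma(\bbd)$ the matrix $\bbd_S$ has full column rank, so $\bbd_S^\dagger\bbd_S=\bbi$. Hence any $\bbv$ spanning $\Span(\bx(\Lambda))$ factors as $\bbv=\bbd_S\bbw$ with $\bbw=\bbd_S^\dagger\bbv$, which produces a $k$-sparse solution of $\bbv=\bbd\bbu$ whose row support lies in $S$; its uniqueness follows exactly as in the previous paragraph (two $k$-sparse solutions differ by a matrix each of whose columns is $2k$-sparse and annihilated by $\bbd$). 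To upgrade ``contained in $S$'' to ``equal to $S$'' I would use the definition $S=\bigcup_\lambda\mathrm{supp}(\bgam(\lambda))$: the columns of $\bbw$ span $\bbd_S^\dagger\,\Span(\bx(\Lambda))=\Span\{\bgam^S(\lambda):\lambda\in\Lambda\}$, and for each $s\in S$ some $\bgam^S(\lambda)$ has a nonzero $s$-th entry, so this span is contained in no coordinate hyperplane $\{x_s=0\}$; therefore every row of $\bbw$ indexed by $S$ is nonzero. This is exactly the support-recovery guarantee of \cite{ME08}.

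The reconstruction formula is then immediate: applying $\bbd_S^\dagger$ to $\bx(\lambda)=\bbd_S\bgam^S(\lambda)$ and using $\bbd_S^\dagger\bbd_S=\bbi$ gives $\bgam^S(\lambda)=\bbd_S^\dagger\bx(\lambda)$, with all entries outside $S$ equal to zero. For the last assertion, under $k<(\sqrt{2}-0.5)/\mu(\bphi,\bpsi)$ I would invoke Proposition~\ref{prop:kr}, together with the fact that $\sigma(\bbd)$ is integer-valued, to conclude $\sigma(\bbd)\geq 2k$, so that everything above is in force; I would then apply Proposition~\ref{prop:mmv} to the finite system $\bbv=\bbd\bbu$---with $\bbv$ in the role of the data matrix $\bbx$ and row sparsity $k$---to conclude that $\bbu$, and therefore the support $S$, is recovered by the $\ell_1$ program (\ref{eq:l1mmv}) with any vector norm.

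I expect the only genuinely delicate point to be the exactness of the support in the second paragraph, i.e.\ arguing that $\Span\{\bgam^S(\lambda):\lambda\in\Lambda\}$ meets every coordinate axis of $\CC^{|S|}$; this is what guarantees that passing to the finite surrogate $\bbv$ cannot drop an active generator, and it is precisely where the structure of \cite{ME08} is needed. The remaining steps are a transcription of the finite-dimensional coherence arguments together with the single linear-algebraic identity $\bbd_S^\dagger\bbd_S=\bbi$.
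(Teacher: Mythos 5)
Your proof is correct and follows essentially the same route as the paper: Theorem~\ref{ThKey} is presented there as a summary of the continuous-to-finite reduction developed in the paragraphs immediately preceding it, resting on Proposition~\ref{prop:kr}, Proposition~\ref{prop:mmv}, and the support-recovery result of \cite{ME08}, which the paper cites rather than proves. Your only additions are a self-contained argument for the step delegated to \cite{ME08} --- that the row support of the unique $k$-sparse solution of $\bbv=\bbd\bbu$ equals $S$ exactly, because $\Span\{\bgam^S(\lambda):\lambda\in\Lambda\}$ meets no coordinate hyperplane of $\CC^{|S|}$ --- and the correct observation that integrality of $\sigma(\bbd)$ is needed to pass from $k<(\sqrt{2}-0.5)/\mu(\bphi,\bpsi)$ to $\sigma(\bbd)\geq 2k$; both are sound and fill genuine gaps left implicit in the paper.
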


\subsection{Analog Dictionaries}
\label{sec:auc}

In Section~\ref{sec:ar} we showed that the analog decomposition
problem (\ref{eq:loa}) is equivalent to (\ref{eq:l1f}). The later
is very similar to the IMV problem (\ref{eq:imv}). Indeed, we seek
a continuous set of vectors $\bgam$ with joint sparsity that have
the smallest number of non-zero rows, and satisfy an infinite set
of linear equations. However, in contrast to (\ref{eq:imv}), the
matrix in (\ref{eq:l1f}) depends on $\omega$. Therefore,
Theorem~\ref{ThKey} cannot be applied since it is not clear what
matrix figures in the finite MMV representation.
 Nonetheless, the essential idea of
separating the support recovery from that of the actual values of
$\bgam \ejo$ is still valid. In particular, we can solve
(\ref{eq:l1f}) by first determining the support set of $\bgam
\ejo$. Once the support is known, we have that
\begin{equation}\label{eq:bgams}
\bgam^S \ejo =( \bbd_S^H\ejo\bbd_S\ejo)^{-1}\bbd_S^H\ejo \bc\ejo,
\end{equation}
where $\bbd\ejo$ is defined by (\ref{eq:dom}). The inverse in
(\ref{eq:bgams}) exists if $\mu(\bbi,\bbm_{\phi\psi}\ejo)$ is
smaller than $1/k$. From Proposition~\ref{prop:formm}, it is
sufficient to require that $\mu(\Phi,\Psi)<1/k$.

To find the support set $S$ we distinguish between two different
cases:
\begin{enumerate}
\item The constant case in which $\bbm_{\phi\psi}\ejo$ of
(\ref{eq:M2}) can be written as
\begin{equation}
\label{eq:rsc} \bbm_{\phi\psi}\ejo=\bba \bbz\ejo.
\end{equation}
Here $\bba$ is a fixed matrix independent of $\omega$, and
$\bbz\ejo$ is an invertible diagonal matrix with diagonal elements
$Z_{\ell}\ejo$; the columns of $\bba$ are normalized such that
$\ess\sup |Z_{\ell}\ejo|=1$ for all $\ell$.\item The rich case in
which the support of every subset of $\bgam \ejo$ of a given size
$M$, is equal to the support $S$ of the entire set.
\end{enumerate}
The first case involves a condition on the dictionary. The second
allows for arbitrary dictionaries, but imposes a constraint on the
expansion sequences. This restriction is quite mild, and satisfied
for a large class of dictionaries and signals. In both cases we
show that the support can be found by solving a finite-dimensional
optimization problem.

\textit{Constant case:} We begin by treating the setting in which
the sampled cross correlation matrix can be written as in
(\ref{eq:rsc}). For example, consider the case in which $\A$ is
the space of real signals bandlimited to $\pi N/T$, as in
Section~\ref{sec:example}. Then $\phi_{\ell}(t),\psi_{\ell}(t)$
defined by (\ref{eq:psie}), (\ref{eq:phie}) satisfy (\ref{eq:rsc})
(for $\omega \geq 0$) with $\bba=(1/\sqrt{N}) \overline{\bbf}$,
where $\bbf$ denotes the $N \times N$ Fourier matrix and
$Z_{\ell}\ejo=\exp\{j\omega(\ell-1)/N\}$.

The unitarity of $\bbm_{\phi\psi}\ejo$, which follows from
Proposition~\ref{prop:formm}, implies that
$\bba=\bbm_{\phi\psi}\ejo\bbz^{-1}\ejo$ must be unitary as well.
Indeed, for all $\omega$, we have
\begin{equation}
\label{eq:zh} \bba^H\bba=(\bbz\ejo \bbz^H \ejo)^{-1}.
\end{equation}
Therefore, $|Z_{\ell}\ejo|$ is independent of $\omega$. Since
$\max_\omega |Z_{\ell}\ejo|=1$, we conclude that
$|Z_{\ell}\ejo|=1$ for all $\omega$ so that $\bbz\ejo \bbz^H
\ejo=\bbi$, which together with (\ref{eq:zh}) proves the unitarity
of $\bba$.

To obtain a correlation structure of the form (\ref{eq:rsc}) we
may start with a given orthonormal basis $\{\psi_\ell(t-nT)\}$,
and then create another orthonormal basis $\{\phi_\ell(t-nT)\}$ by
choosing
\begin{equation}
\phi_{\ell}(t)=\sum_{r=1}^N \sum_{n \in \ZZ}
a_r^{\ell}[n]\psi_r(t-nT).
\end{equation}
Here $a_r^{\ell}[n]$ is any set of sequences for which
$A_r^{\ell}\ejo=[\bba]_{\ell r}Z_r\ejo$ with $\bba$ an arbitrary
unitary matrix, and $\bbz$ is an arbitrary diagonal unitary
matrix. This is a direct consequence of the proof of
Proposition~\ref{prop:formm}.

Under the condition (\ref{eq:rsc}) we now show that we can
convert (\ref{eq:l1f}) to a finite MMV problem. Indeed, let the
first $N$ elements of $\bgam\ejo$ be denoted by $\ba \ejo$ and the
remaining $N$ elements by $\bb \ejo$. Then
 (\ref{eq:l1f}) becomes
\begin{equation}
\label{eq:l1ff}
\begin{array}{ll}
\min_{\ba,\bd} & \|\ba\ejo\|_{2,0}+\|\bd\ejo\|_{2,0}
\\
\sto & \bc\ejo=\left[
\begin{array}{ll}
\bbi & \bba
\end{array}
\right] \left[
\begin{array}{l}
\ba\ejo \\ \bd\ejo
\end{array}
\right],
\end{array}
\end{equation}
where $\bd \ejo=\bbz\ejo\bb\ejo$, and we used the fact that since
$\bbz\ejo$ is diagonal and invertible,
$\|\bb\ejo\|_{2,0}=\|\bd\ejo\|_{2,0}$ so that the two vector
sequences have the same sparsity. Problem (\ref{eq:l1ff}) has the
required IMV form. It can be solved by first finding the sparsest
matrix $\bbu$ that satisfies $\bbc=[\bbi\,\,\bba]\bbu$ where the
columns of $\bbc$ form a basis for the span of $\{\bc\ejo,-\pi
\leq \omega \leq \pi\}$. As we have seen, a basis can be
determined in frequency by first forming the correlation matrix
\begin{equation}
\bbq =\int_{-\pi}^\pi \bc\ejo \bc^H\ejo d\omega.
\end{equation}
Alternatively, we can find a basis in time by creating
\begin{equation}
\bbq' =\sum_{n = -\infty}^\infty \bc[n]\bc^H[n].
\end{equation}
The basis can then be chosen as the eigenvectors corresponding to
nonzero eigenvalues of $\bbq$ or $\bbq'$, which we denote by
$\bbc$.
 To find $\bbu$
we consider the convex program
\begin{equation} \label{eq:asmmv}
\min_{\bbu}\|\bs(\bbu)\|_1 \quad \sto \bbc=\left[
\begin{array}{ll}
\bbi & \bba
\end{array}
\right] \bbu.
\end{equation}

Let $S$ denote the rows in $\bbu$ that are not identically zero
and let $\bgam^S[n]$ be the corresponding sequences
$\gamma_{\ell}[n],\ell \in S$. Then
\begin{equation}
\bgam^S\ejo=\left[
\begin{array}{c}
\bbi \\
\bbz_{S'}^{-1}\ejo
\end{array}
\right] (\bbd_S^H\bbd_S)^{-1}\bbd_S^H\bc\ejo,
\end{equation}
where $\bbd=[\bbi\,\,\bba]$, and $S'$ denotes the rows in $S$
between $N+1$ and $2N$. The remaining sequences
$\gamma_{\ell},\ell \notin S$ are identically zero.
Proposition~\ref{prop:mmv} provides conditions under which
(\ref{eq:asmmv}) will find the sparsest representation in terms of
the coherence $\mu(\bbi,\bba)$ (where we rely on the fact that
$\bba$ is unitary). Since $|Z_{\ell}\ejo|=1$, we have that
$|[\bba\bbz\ejo]_{ij}|=|\bba_{ij}|$ and
$\mu(\bbi,\bba)=\mu(\Phi,\Psi)$.

We summarize our results on analog sparse decompositions in the
following theorem.
\begin{theorem}
\label{thm:as}  Let $\{\phi_{\ell}(t),1 \leq \ell \leq N\}$ and
$\{\psi_{\ell}(t),1 \leq \ell \leq N\}$ denote two orthonormal
generators of a SI subspace $\A$ of $L_2$ with coherence
$\mu(\Phi,\Psi)$. Let $x(t)$ be a signal in $\A$ and suppose there
exists sequences $a_{\ell}[n],b_{\ell}[n]$ such that
\begin{equation}
x(t)=\sum_{\ell=1}^{N} \sum_{n \in \ZZ}
(a_\ell[n]\phi_\ell(t-nT)+b_\ell[n]\psi_\ell(t-nT))
\end{equation}
with $k=\|\ba\|_{2,0}+\|\bb\|_{2,0}$ satisfying $k <
(\sqrt{2}-0.5)/\mu(\Phi,\Psi)$. Let $\bbm_{\phi\psi}\ejo$ be the
cross-correlation matrix defined by (\ref{eq:M2}) and suppose that
it can be written as $\bbm_{\phi\psi}\ejo=\bba\bbz\ejo$, where
$\bba$ is unitary and $\bbz\ejo$ is a diagonal unitary matrix.
Then, the sequences $a_{\ell}[n]$ and $b_{\ell}[n]$ can be found
by solving
\begin{equation}
\label{eq:asmmvt} \begin{array}{ll}
 \min_{\bgaml_1,\bgaml_2} &
\|\bs(\bgaml_1)\|_1+\|\bs(\bgaml_2)\|_1 \\
\sto & \bbc=\left[
\begin{array}{ll}
\bbi & \bba
\end{array}
\right] \left[
\begin{array}{ll}
\bgaml_1 \\ \bgaml_2
\end{array}
\right].
\end{array}
\end{equation}
Here $\bbc$ is chosen such that its columns form a basis for the
range of $\{\bc\ejo,\omega \in (-\pi,\pi]\}$ where the $\ell$th
component of $\bc\ejo$ is the Fourier transform at frequency
$\omega$ of $c_{\ell}[n]=\inner{\phi_{\ell}(t-nT)}{x(t)}$, and
$\bs(\bgaml_i)$ is a vector whose $\ell$th element is equal to
$\|\bgaml_i^{\ell}\|$ where the norm is arbitrary. Let $S_1,S_2$
denote the rows of $\bgaml_1,\bgaml_2$ that are not identically
equal $0$, and define $\bbd_S=[\bbi_{S_1}\,\,\bba_{S_2}]$. Then
the non-zero sequences $a_{\ell}[n],b_{\ell}[n],\ell \in S$ are
given in the Fourier domain by
\begin{equation}
\label{eq:asthm} \left[
\begin{array}{c}
\ba_S\ejo \\
\bb_S\ejo
\end{array}
\right] =\left[
\begin{array}{c}
\bbi \\
\bbz^{-1}_{S_2}\ejo
\end{array}
\right] (\bbd_S^H\bbd_S)^{-1}\bbd_S^H\bc\ejo.
\end{equation}
\end{theorem}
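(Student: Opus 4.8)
The plan is to reduce the statement to the finite MMV recovery guarantee of Theorem~\ref{ThKey} by exploiting the constant-case factorization $\bbm_{\phi\psi}\ejo=\bba\bbz\ejo$ of (\ref{eq:rsc}). First I would recall from Section~\ref{sec:ar} that the decomposition problem is equivalent to the frequency-domain program (\ref{eq:l1f}) with dictionary $[\bbi\,\,\bbm_{\phi\psi}\ejo]$ and data $\bc\ejo$, the Fourier transform of $c_\ell[n]=\inner{\phi_\ell(t-nT)}{x(t)}$. Writing the first $N$ entries of $\bgam\ejo$ as $\ba\ejo$ and the last $N$ as $\bb\ejo$ and substituting the factorization gives $\bc\ejo=\ba\ejo+\bba\,\bd\ejo$ with $\bd\ejo=\bbz\ejo\bb\ejo$, i.e.\ the $\omega$-independent dictionary $[\bbi\,\,\bba]$ of (\ref{eq:l1ff}). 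The crucial observation is that since $\bbz\ejo$ is diagonal with unit-modulus entries, rescaling each row of $\bb\ejo$ leaves its norm-support unchanged, so $\|\bb\ejo\|_{2,0}=\|\bd\ejo\|_{2,0}$ and the two programs have identical sparsity. The hypothesis $k=\|\ba\|_{2,0}+\|\bb\|_{2,0}$ thus translates into a $k$-sparse IMV instance for the constant dictionary $\bbd=[\bbi\,\,\bba]$, the joint support holding because a sequence that is identically zero has a vanishing Fourier transform for every $\omega$.

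Next I would verify the coherence bookkeeping. Since $\bba$ is unitary (shown via Proposition~\ref{prop:formm} together with $|Z_\ell\ejo|=1$), the matrix $[\bbi\,\,\bba]$ is a concatenation of two orthonormal bases, so Theorem~\ref{ThKey} applies. Because $|Z_\ell\ejo|=1$ we have $|[\bba\bbz\ejo]_{ij}|=|\bba_{ij}|$, whence $\mu(\bbi,\bba)=\max_{i,j}|\bba_{ij}|=\mu(\Phi,\Psi)$; thus the assumed bound $k<(\sqrt{2}-0.5)/\mu(\Phi,\Psi)$ is exactly $k<(\sqrt{2}-0.5)/\mu(\bbi,\bba)$. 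Invoking Theorem~\ref{ThKey}, forming $\bbc$ whose columns span $\Span\{\bc\ejo\}$ (\eg as eigenvectors of $\int_{-\pi}^\pi\bc\ejo\bc^H\ejo\,d\omega$), and solving the finite MMV program (\ref{eq:asmmvt}) then recovers the true row support exactly and certifies it as the unique $k$-sparse support; denote by $S_1,S_2$ the supports of $\bgaml_1,\bgaml_2$.

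Finally, with $S_1,S_2$ identified I would reconstruct the values. The bound $k<(\sqrt{2}-0.5)/\mu(\Phi,\Psi)$ implies $\sigma([\bbi\,\,\bba])\geq 2k$ via Proposition~\ref{prop:kr}, so $\bbd_S=[\bbi_{S_1}\,\,\bba_{S_2}]$ has full column rank and the per-frequency least-squares formula of (\ref{eq:bgams}) returns $\bgam^S\ejo$ in the $(\ba,\bd)$ variables. Undoing the change of variables $\bb\ejo=\bbz^{-1}\ejo\bd\ejo$ on the rows indexed by $S_2$ produces exactly (\ref{eq:asthm}). The main obstacle is justifying the reduction to a \emph{single} constant dictionary: Theorem~\ref{ThKey} cannot be applied to (\ref{eq:l1f}) directly because its matrix depends on $\omega$, and the whole argument rests on the factorization absorbing that dependence into the invertible diagonal $\bbz\ejo$ without disturbing either the joint sparsity pattern or the coherence.
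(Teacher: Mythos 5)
Your proposal is correct and follows essentially the same route as the paper: Section~\ref{sec:auc} likewise reduces (\ref{eq:loa}) to the frequency-domain program (\ref{eq:l1f}), absorbs the $\omega$-dependence into the unit-modulus diagonal $\bbz\ejo$ via $\bd\ejo=\bbz\ejo\bb\ejo$ (preserving $\|\cdot\|_{2,0}$), notes $\mu(\bbi,\bba)=\mu(\Phi,\Psi)$, applies the CTF/IMV machinery of Theorem~\ref{ThKey} and Proposition~\ref{prop:mmv} to the constant dictionary $[\bbi\,\,\bba]$, and recovers the sequences by the pseudo-inverse followed by undoing the change of variables on the rows in $S_2$. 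The only cosmetic difference is that you route the full-column-rank claim for $\bbd_S$ through $\sigma(\bbd)\geq 2k$ via Proposition~\ref{prop:kr}, where the paper only needs (and invokes) the weaker condition $\mu<1/k$ for invertibility of $\bbd_S^H\bbd_S$.
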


In Theorem~\ref{thm:as} the sparse decomposition is determined
from the samples $c_{\ell}[n]=\inner{\phi_{\ell}(t-nT)}{x(t)}$.
However, the theorem also holds when $c_{\ell}[n]$ is replaced by
any sequence of samples $\inner{h_{\ell}(t-nT)}{x(t)}$ with
$h_{\ell}(t)$ being an orthonormal basis for $\A$ such that both
$\bbm_{h\phi}\ejo$ and $\bbm_{h\psi}\ejo$ are constant up to a
diagonal matrix:
\begin{equation}
\bbm_{h\phi}\ejo=\bba_1 \bbz_1\ejo,\quad
\bbm_{h\psi}\ejo=\bba_2\bbz_2\ejo.
\end{equation}
In this case the matrix $[\bbi\,\,\bba]$ in (\ref{eq:asmmvt})
should be replaced by the matrix $[\bba_1\,\,\bba_2]$. Once we
find the sparsity set $S$, the sequences that are not zero can be
found as in (\ref{eq:asthm}) with the identity in the first matrix
replaced by the appropriate rows of $\bbz_1^{-1}\ejo$.

\textit{Rich case:} We next consider the case of an arbitrary
$\bbd\ejo$, and impose a condition on the sequences
$\gamma_\ell[n]$. Specifically, we assume that there exists a
finite number $M$ such that the support set of
$\{\bgam(e^{j\omega_i}),|i|=M\}$ is equal $S$. In other words, the
joint support of any $M$ vectors $\bgam(e^{j\omega_i})$ is equal
to the support of the entire set. Under this assumption, the
support recovery problem reduces to an MMV model and can therefore
be solved efficiently using MMV techniques. Specifically, we
select a set of $M$ frequencies $\omega_i$, and seek the matrix
$\bgaml$ with columns $\bgam_i$ that is the solution to
\begin{equation}
\label{eq:mmvtw} \begin{array}{ll} \min_{\bgaml}&
\|\bs(\bgaml)\|_1
 \\
\sto & \bc(e^{j\omega_i})= \left[
\begin{array}{ll}
\bbi & \bbm_{\phi\psi}(e^{j\omega_i})
\end{array}
\right] \bgam_i,\quad 1 \leq i \leq M.
\end{array}
\end{equation}

 If
we choose $\bs(\bgaml)$ as the $\ell_1$ norm, then
(\ref{eq:mmvtw}) is equivalent to $M$ separate problems, each of
the form
\begin{equation}
\label{eq:mmvtw2} \min_{\bgam}\|\bgam\|_1 \quad \sto \bc= \left[
\begin{array}{ll}
\bbi & \bbu
\end{array}
\right] \bgam,
\end{equation}
were $\bc=\bc(e^{j\omega_i})$ and
$\bbu=\bbm_{\phi\psi}(e^{j\omega_i})$ is a unitary matrix (see
Proposition~\ref{prop:formm}). From Proposition~\ref{prop:unique},
the correct sparsity pattern will be recovered if $\mu(\bbi,\bbu)$
is low enough, which due to Proposition~\ref{prop:formm} can be
guaranteed by upper bounding $\mu(\Phi,\Psi)$.

 In some cases, even one
frequency $\omega_i$ may be sufficient in order to determine the
correct sparsity pattern; this happens when the support of
$\bgam(e^{j\omega_i})$ is equal to the support of the entire set
of sequences $\bgam(e^{j\omega})$.  In practice, we can solve for
an increasing number of frequencies, with the hope of recovering
the entire support in a finite number of steps. Although we can
always construct a set of signals whose joint support cannot be
detected in a finite number of steps, this class of signals is
small. Therefore, if the sequences are generated at random, then
with high probability choosing a finite number of frequencies will
be sufficient to recover the entire support set.

\section{Extension to Arbitrary Dictionaries} \label{sec:frame}

Until now we discussed the case of a dictionary comprised of two
orthonormal bases. The theory we developed can easily be extended
to treat the case of an arbitrary dictionary comprised of
sequences $d_{\ell}(t)$ that form a frame (\ref{eq:frame}) for
$\A$. These results follow from combining the approach of the
previous section with the corresponding statements in the discrete
setting developed in \cite{GN03,DE03,T04}.

Specifically, suppose we would like to decompose a vector $\bx \in
\RR^N$ in terms of a dictionary $\bbd$ with columns $\bd_{\ell}$
using as few vectors as possible. This corresponds to solving
\begin{equation}
\label{eq:loai} \min_{\bgam}\|\bgam\|_0 \quad \sto  \bx=\bbd\bgam.
\end{equation}
Since (\ref{eq:loai}) has combinatorial complexity, we would like
to replace it with a computationally efficient algorithm. If
$\bbd$ has low coherence, where in this case the coherence is
defined by
\begin{equation}
\label{eq:muD} \mu(\bbd)= \max_{\ell \neq r}
\frac{|\bd_\ell^H\bd_r|}{\|\bd_{\ell}\|\|\bd_{r}\|},
\end{equation}
then we can determine the sparsest solution $\bgam$ by solving the
$\ell_1$ problem
\begin{equation}
\label{eq:l1a2} \min_{\bgam}\|\bgam\|_1 \quad \sto  \bx=\bbd\bgam.
\end{equation}
The coherence of a dictionary measures the similarity between its
elements and is equal to $0$ only if the dictionary consists of
orthonormal vectors. A general lower bound on the coherence of a
matrix $\bbd$ of size $N \times m$ is \cite{T04} $\mu(\bbd) \geq
[(m-N)/(N(m-1))]^{1/2}$. The same results hold true for the
corresponding MMV model, and are incorporated in the following
proposition \cite{DE03,GN03,T04,Chen}:
\begin{proposition}
\label{prop:l1da} Let $\bbd$ be an arbitrary dictionary with
coherence $\mu(\bbd)$ given by (\ref{eq:muD}). Then the Kruskal
rank satisfies $\sigma(\bbd)>1/\mu(\bbd)-1$. Furthermore, if there
exists a choice of coefficients $\bgaml$ such that
$\bbx=\bbd\bgaml$ and
\begin{equation}
\label{eq:gamca} \|\bs(\bgaml)\|_0 <\frac{1}{2}\bl
1+\frac{1}{\mu(\bbd)}\br,
\end{equation}
then the unique sparse representation can be found by solving
(\ref{eq:l1mmv}).
\end{proposition}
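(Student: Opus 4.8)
The plan is to establish the two assertions in turn, proving the Kruskal-rank bound first so that it can be reused for uniqueness. For the bound on $\sigma(\bbd)$ I would use a Gershgorin disc estimate on Gram matrices of column subsets. Normalizing the columns of $\bbd$ (which leaves both $\sigma(\bbd)$ and the coherence (\ref{eq:muD}) unchanged), fix any index set $\Lambda$ with $|\Lambda|=k$ and consider the Gram matrix $\bbd_\Lambda^H\bbd_\Lambda$. Its diagonal entries equal $1$ and its off-diagonal entries are bounded in magnitude by $\mu(\bbd)$, so each Gershgorin disc is centered at $1$ with radius at most $(k-1)\mu(\bbd)$; hence $\lambda_{\min}(\bbd_\Lambda^H\bbd_\Lambda)\geq 1-(k-1)\mu(\bbd)$. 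This is strictly positive whenever $k-1<1/\mu(\bbd)$, so every such set of $k$ columns is linearly independent. Taking the largest admissible $k$ yields $\sigma(\bbd)\geq 1/\mu(\bbd)$, which in particular gives the stated $\sigma(\bbd)>1/\mu(\bbd)-1$.

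Turning to the recovery statement, I would first settle uniqueness. Writing $k=\|\bs(\bgaml)\|_0$, condition (\ref{eq:gamca}) is $k<\frac{1}{2}(1+1/\mu(\bbd))$, i.e. $2k-1<1/\mu(\bbd)$; combined with the first part this forces $\sigma(\bbd)\geq 2k$. If two joint-$k$-sparse matrices $\bgaml,\bgaml'$ both satisfied $\bbx=\bbd\bgaml=\bbd\bgaml'$, then $\bbd(\bgaml-\bgaml')=0$ while the combined row-support of $\bgaml-\bgaml'$ has size at most $2k$; a nonzero difference would then exhibit a nontrivial linear dependence among at most $2k$ columns of $\bbd$, contradicting $\sigma(\bbd)\geq 2k$. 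Hence the $k$-sparse representation is unique, mirroring the two-basis case of Proposition~\ref{prop:mmv}.

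It then remains to show that the mixed-norm program (\ref{eq:l1mmv}) actually returns this unique $\bgaml$. My approach is to reduce to the single-vector guarantee of \cite{T04,GN03} --- that $\|\bgam\|_0<\frac{1}{2}(1+1/\mu(\bbd))$ suffices for the scalar program (\ref{eq:l1a2}) to recover the unique sparse $\bgam$ --- and then lift it to the MMV setting following \cite{Chen,Cotter}. Concretely, I would verify Tropp's exact recovery condition $\max_{r\notin S}\|\bbd_S^\dagger\bd_r\|_1<1$ on the true support $S$ of size $k$. From $\|\bbd_S^H\bd_r\|_1\leq k\mu(\bbd)$ and the Gershgorin estimate $\lambda_{\min}(\bbd_S^H\bbd_S)\geq 1-(k-1)\mu(\bbd)$ one gets $\|\bbd_S^\dagger\bd_r\|_1\leq k\mu(\bbd)/(1-(k-1)\mu(\bbd))$, which is strictly below $1$ exactly when $(2k-1)\mu(\bbd)<1$, i.e. precisely under (\ref{eq:gamca}). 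Because every column of $\bbx$ shares the single support $S$, the dual certificate built from this condition serves all columns simultaneously, so the optimality conditions of the convex program (\ref{eq:l1mmv}) are met by the true $\bgaml$ for any choice of row norm $\bs(\cdot)$.

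The Gershgorin step and the uniqueness argument are routine. I expect the genuine obstacle to be the final step: certifying that minimizing the coupled mixed norm $\|\bs(\bgaml)\|_1$, rather than $m$ decoupled scalar $\ell_1$ problems, cannot move energy onto rows outside $S$. The cleanest resolution I see is the shared-support dual certificate above, which makes the MMV recovery threshold coincide with the scalar one; a viable alternative is a direct convexity argument showing that any feasible competitor with strictly larger row-support has strictly larger mixed norm under the same coherence bound.
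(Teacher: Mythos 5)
The paper does not actually prove Proposition~\ref{prop:l1da}: it is quoted as a known result with a pointer to \cite{DE03,GN03,T04,Chen}, so there is no internal argument to compare yours against. Your proposal is essentially a correct reconstruction of the standard proofs from those references, and each step lands where it should: the Gershgorin/spark argument gives $\sigma(\bbd)\geq 1/\mu(\bbd)>1/\mu(\bbd)-1$; rewriting (\ref{eq:gamca}) as $(2k-1)\mu(\bbd)<1$ forces $\sigma(\bbd)\geq 2k$ and hence uniqueness by the usual difference-of-solutions argument; and Tropp's exact recovery condition $\max_{r\notin S}\|\bbd_S^\dagger\bd_r\|_1<1$, lifted to the multichannel setting, handles the $\ell_1$ step at exactly the threshold $(2k-1)\mu(\bbd)<1$. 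Two soft spots are worth tightening. First, your bound on $\|\bbd_S^\dagger\bd_r\|_1$ mixes norms: the Gershgorin estimate controls $\lambda_{\min}(\bbd_S^H\bbd_S)$, i.e.\ the spectral norm of the inverse, not its $\ell_1\rightarrow\ell_1$ operator norm. You need either the Neumann-series bound $\|(\bbd_S^H\bbd_S)^{-1}\|_{1\rightarrow 1}\leq 1/(1-(k-1)\mu(\bbd))$ (the off-diagonal part of the Gram matrix has $\ell_1\rightarrow\ell_1$ norm at most $(k-1)\mu(\bbd)$), or a Cauchy--Schwarz conversion $\|v\|_1\leq\sqrt{k}\|v\|_2$ applied on both ends; both routes produce the same quotient $k\mu(\bbd)/(1-(k-1)\mu(\bbd))$. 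Second, the assertion that the single-support dual certificate also certifies the minimizer of the coupled mixed norm $\|\bs(\bgaml)\|_1$ for an \emph{arbitrary} inner norm is precisely the nontrivial content of \cite{Chen,Cotter}; you correctly isolate it as the crux, but it is asserted rather than derived. Since the paper itself defers the entire proposition to the literature, your sketch is at least as complete as the text it replaces, and with the norm bookkeeping fixed it would constitute a genuine proof.
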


We now apply Proposition~\ref{prop:l1da} to the analog design
problem. Suppose we have a signal $x(t)$ that lies in a SI space
$\A$, and let $\{d_{\ell}(t-nT),1 \leq \ell \leq m\}$ denote an
arbitrary frame for $\A$ with $m>N$. As an example, consider the
space $\A$ of real signals bandlimited to $(-\pi N/T,\pi N/T]$,
which was introduced in Section~\ref{sec:example}. As we have
seen, this space can be generated by the $N$ functions
\begin{equation}
\phi_{\ell}(t)=\frac{1}{\sqrt{T'}}\sinc((t-(\ell-1) T')/T'),\quad
1 \leq \ell \leq N,
\end{equation}
with $T'=T/N$. Suppose now that we define the functions
\begin{equation}
\label{eq:psitf}
\tilde{\phi}_{\ell}(t)=\frac{1}{\sqrt{\tilde{T}}}\sinc((t-(\ell-1)
\tilde{T})/\tilde{T}),\quad 1 \leq \ell \leq m,
\end{equation}
where $\tilde{T}=T/m$ and $m>N$. Using similar reasoning as that
used to establish the basis properties of the generators
(\ref{eq:psit}), it is easy to see that
$\{\tilde{\phi}_{\ell}(t)\}$ constitute an orthonormal basis for
the space of signals bandlimited to $(-\pi m/T,\pi m/T]$ which is
larger than $\A$. Filtering each one of the basis signals with a
(scaled) LPF with cut-off $\pi /T'$ will result in a redundant set
of functions
\begin{equation}
\label{eq:da} d_{\ell}(t)=\frac{1}{\sqrt{T'}}\sinc((t-(\ell-1)
\tilde{T})/T'),\quad 1 \leq \ell \leq m,
\end{equation}
that form a frame for $\A$ \cite{O02,MEp08}.

Our goal is to represent a signal $x(t)$ in $\A$ using as few
sequences $d_{\ell}(t)$ as possible. More specifically, our
problem is to choose the vector sequence $\bgam[n]$ such that
\begin{equation}
\label{eq:xd} x(t)=\sum_{\ell=1}^{m} \sum_{n \in \ZZ}
\gamma_\ell[n]d_\ell(t-nT),
\end{equation}
and $\|\bgam\|_{2,0}$ is minimized.

To derive an infinite-dimensional alternative to (\ref{eq:l1a2})
let $\{h_{\ell}(t)\}$ generate a basis for $\A$. Then $x(t)$ is
uniquely determined by the $N$ sampling sequences
\begin{equation}
c_{\ell}[n]=\inner{h_{\ell}(t-nT)}{x(t)}=r_{\ell}(nT),
\end{equation}
where $r_{\ell}(t)$ is the convolution $r_{\ell}(t)=h(-t)*x(t)$.
 Therefore, $x(t)$ satisfies (\ref{eq:xd}) only if
\begin{equation}
\label{eq:constca} c_r[m]=\sum_{\ell=1}^{m} \sum_{n \in \ZZ}
\gamma_\ell[n]a_{r\ell}[n],
\end{equation}
where $a_{r\ell}[n]=\inner{h_{r}(t-nT)}{d_\ell(t)}$. In the
Fourier domain (\ref{eq:constca}) becomes
\begin{equation}
\label{eq:constcfa} C_r\ejo=\sum_{\ell=1}^{m} \Gamma_\ell\ejo
A_{r\ell}\ejo=\sum_{\ell=1}^{m} \Gamma_\ell\ejo R_{h_rd_\ell}\ejo.
\end{equation}
Denoting by $\bc\ejo,\bgam\ejo$ the vectors with elements
$C_\ell\ejo,\Gamma_\ell\ejo$ respectively we can write
(\ref{eq:constcfa}) as
\begin{equation}
\label{eq:finalat} \bc\ejo=\bbm_{hd}\ejo\bgam\ejo.
\end{equation}
Therefore, our problem is to find the sparsest set of $\bgam\ejo$
that satisfies (\ref{eq:finalat}).

In order to solve the sparse decomposition problem we first treat
the case in which $\{h_{\ell}(t)\}$ are chosen such that
\begin{equation}
\label{eq:rsc2} \bbm_{hd}\ejo=\bbw\ejo\bba \bbz\ejo,
\end{equation}
where $\bba$ is a fixed matrix independent of $\omega$,
$\bbz\ejo$ is an invertible diagonal matrix with diagonal elements
$Z_{\ell}\ejo$ satisfying  $\ess\sup |Z_{\ell}\ejo|=1$, and
$\bbw\ejo$ is an arbitrary invertible matrix. Going back to the
bandlimited frame (\ref{eq:da}) it can be easily seen that with
$h_{\ell}(t)=\phi_{\ell}(t)$, (\ref{eq:rsc2}) is satisfied.
Indeed,
\begin{eqnarray}
\lefteqn{\overline{H}_{\ell}(\omega)D_{r}(\omega)= } \nonumber \\
&& \hspace*{-0.2in} \left\{
\begin{array}{ll}
\frac{T}{N} e^{j \omega (\ell-1) T/N}e^{-j \omega (r-1) T/m}, & \omega \in (-\pi N/T,\pi N/T]; \\
0, & \mbox{otherwise}.
\end{array}
\right.
\end{eqnarray}
Therefore,
\begin{equation}
R_{h_\ell d_r} \ejo = e^{j \omega (\ell-1) /N}e^{-j \omega (r-1)
/m} f(\ell,r),
\end{equation}
where $f(\ell,r)$ is a function only of the indices $\ell,r$ and
not the frequency $\omega$. Choosing $Z_{r} \ejo=e^{-j \omega
(r-1)/m}$ and $\bbw \ejo $ as a diagonal matrix with diagonal
elements $W_{\ell} \ejo=e^{j \omega (\ell-1)/N}$ leads to the
representation (\ref{eq:rsc2}).

When $\bbm_{hd}\ejo$ has the form (\ref{eq:rsc2}), the system of
equations (\ref{eq:finalat}) becomes
\begin{equation}
\label{eq:finala} \bd\ejo=\bba\bbz\ejo\bgam\ejo=\bba\ba\ejo,
\end{equation}
where we denoted $\bd\ejo=\bbw^{-1}\ejo\bc\ejo$,
$\ba\ejo=\bbz\ejo\bgam\ejo$ and used (\ref{eq:rsc2}). Clearly,
$\|\ba\ejo\|_{2,0}= \|\bgam\ejo\|_{2,0}$
 because $\bbz\ejo$ is invertible and
diagonal. Therefore, the sparse decomposition problem is
equivalent to finding $\ba\ejo$ satisfying (\ref{eq:finala}) and
such that $\|\ba\ejo\|_{2,0}$ is minimized.

 As in the previous section, the sparsest $\ba\ejo$ can be determined by
first converting (\ref{eq:finala}) to a finite MMV problem, in
which we seek the sparsest matrix $\bbu$ that satisfies
$\bbc=\bba\bbu$ where the columns of $\bbc$ form a basis for the
span of $\{\bbw^{-1}\ejo\bc\ejo,-\pi \leq \omega \leq \pi\}$. The
matrix $\bbu$ can be determined by solving the convex problem
\begin{equation}
\label{eq:asmmv2} \min_{\bbu}\|\bs(\bbu)\|_1 \quad \sto
\bbc=\bba\bbu.
\end{equation}
From Proposition~\ref{prop:l1da} it follows that the unique sparse
matrix $\bbu$ can be recovered as long as $\mu(\bba)$ satisfies
(\ref{eq:gamca}). Once we determine the non-zero rows $S$ in
$\bbu$, we can find the non-zero sequences $\bgam^S[n]$ by noting
that from Proposition~\ref{prop:l1da} the columns $\bba_S$ of
$\bba$ corresponding to $S$ are linearly independent. Therefore,
\begin{equation}
\label{eq:finals}
\bgam^S\ejo=\bbz_S^{-1}\ejo(\bba_S^H\bba_S)^{-1}\bba_S^H\bbw^{-1}\ejo\bc\ejo.
\end{equation}

If (\ref{eq:rsc2}) is not satisfied, but instead $\bgam\ejo$ is
rich, so that the support of every $M$ set of vectors (for $M$
different frequencies) is equal to the span of the entire set,
then we can still convert the problem into an MMV. To do this, we
choose $M$ frequency values and seek the set of vectors $\bgam_i,1
\leq i \leq M$ with the sparsest joint support that satisfy
\begin{equation}
\label{eq:finalat2}
\bc(e^{j\omega_i})=\bbm_{hd}(e^{j\omega_i})\bgam_i,\quad 1 \leq i
\leq M.
\end{equation}
Once the support is determined, we can find the non-zero sequences
$\bgam^S[n]$ using (\ref{eq:finals}).

We have outlined a concrete method to find the sparsest
representation of a signal $x(t)$ in $\A$ in terms of an arbitrary
dictionary. In our proposed approach, the reconstruction is
performed with respect to the samples
$c_{\ell}[n]=\inner{h_{\ell}(t-nT)}{x(t)}$. We may alternatively
view our algorithm as a method to reconstruct $x(t)$ from these
samples assuming the knowledge that $x(t)$ has a sparse
decomposition in the given dictionary. Thus, our results can also
be interpreted as a reconstruction method from a given set of
samples, and in that sense complements the results of \cite{E08}.

\section{Conclusion}

In this paper, we extended the recent line of work on generalized
uncertainty principles to the analog domain, by considering sparse
representations in SI bases. We showed that there is a fundamental
limit on the ability to sparsely represent an analog signal in an
infinite-dimensional SI space in two orthonormal bases. The
sparsity bound is similar to that obtained in the
finite-dimensional discrete setting: In both cases the joint
sparsity is limited by the inverse coherence of the bases.
However, while in the finite setting, the coherence is defined as
the maximal absolute inner product between elements from each
basis, in the analog problem the coherence is the maximal absolute
value of the sampled cross-spectrum between the signals.

As in the finite  domain, we can show that the proposed
uncertainty relation is tight by providing a concrete example in
which it is achieved. Our example mimics the finite setting by
considering the class of bandlimited signals as the signal space.
This leads to a Fourier representation that is defined over a
finite, albeit continuous, interval. Within this space we can
achieve the uncertainty limit by considering a bandlimited train
of LPFs. This choice of signal resembles the spike train which is
known to achieve the uncertainty principle in the discrete
setting.

Finally, we treated the problem of sparsely representing an analog
signal in an overcomplete dictionary. Building upon the
uncertainty principle and recent works in the area of compressed
sensing for analog signals, we showed that under certain
conditions on the Fourier domain representation of the dictionary,
the sparsest representation can be found by solving a
finite-dimensional convex optimization problem. The fact that
sparse decompositions can be found by solving a convex
optimization problem has been established in many previous works
in compressed sensing in the finite setting. The additional twist
here is that even though the problem has infinite dimensions, it
can be solved exactly by a finite-dimensional program in many
interesting cases.

In this paper we have focused on analog signals in SI spaces. A
very interesting further line of research is to extend these ideas
and notions to a larger class of analog signals, leading to a
broader notion of analog sparsity and analog compressed sensing.

\section{Acknowledgement}

The author would like to thank Prof. Arie Feuer for carefully
reading a draft of the manuscript and providing many constructive
comments.

\appendices

\section{Proof of Proposition~\ref{prop:orth}}
\label{app:energy}

 To prove the proposition, note that
\begin{eqnarray}
\label{eq:norm} \lefteqn{\int_{-\infty}^\infty |x(t)|^2dt  =
\frac{1}{2\pi}\int_{-\infty}^\infty \left|
X(\omega)\right|^2 d\omega }\nonumber \\
&= &\frac{1}{2\pi}\int_{-\infty}^\infty \left| \sum_{\ell=1}^N
A_{\ell}\ejt \Phi_{\ell}(\omega) \right|^2 d\omega,
\end{eqnarray}
where the last equality follows from (\ref{eq:xeq}). To simplify
(\ref{eq:norm}) we rewrite the integral over the entire real line,
as the sum of integrals over intervals of length $2\pi/T$:
\begin{equation}
\label{eq:integrals} \int_{-\infty}^\infty X(\omega)
d\omega=\int_0^{\frac{2\pi}{T}}\sum_{k=-\infty}^\infty X\bl \omega
- \frac{2\pi}{T}k \br d\omega,
\end{equation}
for all $X(\omega)$. Substituting into (\ref{eq:norm}) and using
the fact that $A_{\ell}\ejt$ is $2\pi/T$-periodic, we obtain
\begin{eqnarray}
\lefteqn{\int_{-\infty}^\infty |x(t)|^2dt =} \nonumber \\
& = &
\frac{1}{2\pi}\int_0^{\frac{2\pi}{T}}\sum_{k=-\infty}^\infty
\left| \sum_{\ell=1}^N A_{\ell}\ejt \Phi_{\ell}\bl \omega -
\frac{2\pi}{T}k \br \right|^2 d\omega \nonumber \\
& = & \frac{T}{2\pi}\int_0^{\frac{2\pi}{T}} \sum_{\ell=1}^N
\sum_{r=1}^N \overline{A_{\ell}}\ejt A_{r}\ejt
R_{\phi_{\ell}\phi_r}\ejo d\omega \nonumber \\
& = & \frac{T}{2\pi}\int_0^{\frac{2\pi}{T}} \sum_{\ell=1}^N
\left|A_{\ell}\ejt \right|^2 d\omega,
\end{eqnarray}
where we used (\ref{eq:orth}).

\section{Proof of Proposition~\ref{prop:formm}}
\label{app:formm}

 To prove the proposition, we first
note that since $\phi_{\ell}(t)$ is in $\A$ for each $\ell$, we
can express it as
\begin{equation}
\label{eq:phip} \phi_{\ell}(t)=\sum_{r=1}^N \sum_{n \in \ZZ}
a_r^{\ell}[n]\psi_r(t-nT)
\end{equation}
for some coefficients $a_r^{\ell}[n]$ with Fourier transform
$A_r^{\ell}\ejo$. We have shown in the proof of
Theorem~\ref{thm:mu} that the orthonormality condition
(\ref{eq:orth}) of $\psi_\ell(t)$ implies that
\begin{equation}
\label{eq:ar}
A_r^{\ell}\ejo=\overline{R}_{\phi_{\ell}\psi_{r}}\ejo.
\end{equation}

Now, since $\{\phi_{\ell}(t-nT)\}$ is an orthonormal basis for
$\A$, $R_{\phi_\ell\phi_r}\ejo=\delta_{\ell,r}$. From
(\ref{eq:phip}),
\begin{eqnarray}
\label{eq:or} R_{\phi_\ell\phi_r}\ejo & = &
\sum_{m=1}^N\sum_{s=1}^N \overline{A}_m^{\ell}\ejo A_s^{r}\ejo
R_{\psi_m\psi_s} \nonumber \\
& = & \sum_{m=1}^N \overline{A}_m^{\ell}\ejo A_m^{r}\ejo
\nonumber \\
& = & [\bbm_{\phi\psi}\ejo]_\ell[\bbm_{\phi\psi}\ejo]_r^H,
\end{eqnarray}
where $[\bbc]_r$ denotes the $r$th row of $\bbc$. The second
equality in (\ref{eq:or}) follows from the orthonormality of
$\{\psi_{\ell}(t-nT)\}$, and the last equality is a result of
(\ref{eq:ar}). Since $R_{\phi_\ell\phi_r}\ejo=\delta_{\ell,r}$, it
follows from (\ref{eq:or}) that the matrix $\bbm_{\phi\psi}\ejo$
is unitary for all $\omega$.

Since $\bbm_{\phi\psi}\ejo$ is unitary, the coherence
$\mu(\bbi,\bbm_{\phi\psi}\ejo)$ is well defined. Now for any
unitary $\bbu$, $\mu(\bbi,\bbu)=\max_{i,j} |U_{ij}|$. In addition,
$\mu(\Phi,\Psi)=\max_{i,j} \sup_\omega
|[\bbm_{\phi\psi}\ejo]_{ij}|$, so that
$\mu(\bbi,\bbm_{\phi\psi}\ejo) \leq \mu(\Phi,\Psi)$, completing
the proof.


\end{document}